\renewcommand{\v}[1]{\bm{#1}}
\renewcommand{\hat}[1]{\widehat{#1}}
\renewcommand{\tilde}[1]{\widetilde{#1}}
\newcommand{\norm}[1]{\left\lVert#1\right\rVert}
\newcommand{\abs}[1]{\left\lvert#1\right\rvert}
\newcommand{\ip}[2]{\left\langle#1,#2\right\rangle}
\newcommand{\ps}[1]{\left(#1\right)}
\newcommand{\reals}{\mathbb{R}}
\newcommand{\bigO}[1]{\mathcal{O}\ps{#1}}
\newcommand{\argmin}[1]{\underset{#1}{\arg \min} \enskip}
\newcommand{\argmax}[1]{\underset{#1}{\arg \max} \enskip}
\newcommand{\W}{\mathcal{W}}
\newcommand{\V}{\mathcal{V}}
\newcommand{\Wbeck}{\widetilde{\mathcal{W}}}
\newcommand{\Vbeck}{\widetilde{\mathcal{V}}}
\newcommand{\U}{\mathcal{U}}
\newcommand{\ones}{\mathbbm{1}}
\newcommand{\ind}[1]{\iota_{#1}}
\newcommand{\itercur}{^{\scaleto{k}{5pt}}}
\newcommand{\iternext}{^{\scaleto{k+1}{5pt}}}
\renewcommand{\vec}[1]{\vv{\bm{#1}}}
\newcommand{\proximal}{\mathrm{prox}}
\newcommand{\shrink}{\mathrm{shrink}}
\DeclareMathOperator*{\rank}{rank}
\DeclareMathOperator{\divergence}{div}
\DeclareMathOperator*{\sign}{\text{sign}}
\newtheorem{theorem}{Theorem}
\let\div\relax
\newcommand{\div}{\v{D}}
\let\vec\relax
\DeclareMathOperator{\vec}{vec}
\newcommand{\alt}[1]{\tilde{\v{#1}}}
\begin{document}
\title{Parallel Unbalanced Optimal Transport Regularization for Large-scale Imaging Problems}
\author
{
\IEEEauthorblockN{John~Lee,
                  Nicholas~P.~Bertrand,
                  Christopher~J.~Rozell,~\IEEEmembership{Senior Member,~IEEE,}}
\thanks{J.L. is with DSO National Laboratories, 12 Science Park Drive, Singapore 118225.
N.P.B. and  C.J.R. are with the School of Electrical and Computer Engineering, Georgia Institute of Technology, Atlanta, GA, 30332 USA.
Email: lzhanyi@dso.org.sg,
       nbertrand@gatech.edu,
       crozell@gatech.edu.
       This work was supported in part by NSF grant CCF-1409422, James S. McDonnell Foundation grant 220020399, and DSO National Laboratories of Singapore.}
}
\maketitle

\begin{abstract}
The modeling of phenomenological structure is a crucial aspect in inverse imaging problems.
One emerging modeling tool in computer vision is the optimal transport framework.
Its ability to model geometric displacements across an image's support gives it attractive qualities similar to optical flow methods that are effective at capturing visual motion, but are restricted to operate in significantly smaller state-spaces.
Despite this advantage, two major drawbacks make it unsuitable for general deployment:
(i) it suffers from exorbitant computational costs due to a quadratic optimization-variable complexity, and
(ii) it has a mass-balancing assumption that limits applications with natural images.
We tackle these issues simultaneously by proposing a novel formulation for an \emph{unbalanced} optimal transport regularizer that has \emph{linear} optimization-variable complexity.
In addition, we present a general parallelizable proximal method for this regularizer, and demonstrate superior empirical performance on novel dynamical tracking applications in synthetic and real video.
\end{abstract}

\begin{IEEEkeywords}
Inverse problems, optimal transport, proximal methods, robust principal components analysis
\end{IEEEkeywords}

% For peerreview papers, this IEEEtran command inserts a page break and
% creates the second title. It will be ignored for other modes.
\IEEEpeerreviewmaketitle

\section{Introduction}

\IEEEPARstart{W}{hile} there has been much progress in inverse imaging fueled by advances in numerical optimization \cite{afonso2010fast, afonso2010augmented, chambolle2011first} and theoretical recovery guarantees \cite{tibshirani1996regression, candes2006robust, candes2011robust}, the modeling of physical phenomena remains a crucial aspect in solving such problems.
%\npb{The wording of this first sentence sounds awkward to me, like it was cut off early. Central aspect of what?}
Many inverse imaging problems (e.g., denoising, deconvolution, inpainting) are cast as optimization programs and are solved by exploiting \emph{a priori} structure through the construction of meaningful modeling regularizers.
%Regularizers that employ $\ell_p$ metrics are often useful at modeling such phenomena because they efficiently describe point-wise statistics under an \textit{Eulerian} representation, i.e., the signal's support is divided up into a fixed grid, for example, a rectangular pixel layout in imaging.
Regularizers that employ $\ell_p$ metrics are often useful at modeling such phenomena because they efficiently describe point-wise statistics under fixed grid representations (e.g., rectangular pixel layout in imaging).
%\npb{Does fixed grid representation have a precise meaning? And we're about to make argument that $\ell_p$ aren't good in these scenarios, right?}
However, $\ell_p$ metrics are sometimes inadequate because they fundamentally lack the mechanism to capture geometric relationships between support coordinates.
%\npb{I usually try to avoid starting a sentence with a symbol (not sure if this is necessary). Maybe reword to "However, $\ell_p$ metrics are sometimes inadequate...}
In contrast, the optimal transport (OT) problem is a framework that explicitly accounts for geometric relationships by modeling a signal as mass that incurs a cost to move around its support.
Under certain geodesic metrics (e.g. Euclidean notions of geometry in the support), the OT framework (e.g., Wasserstein distance) very naturally quantifies uncertainty and deformation.
%\npb{We use the term ``geodesic metric'' in the introduction, but use ``ground cost'' in the background. Are these referring to the same thing? Also the parenthetical statement seems to suggest that the Wasserstein distance is a geodesic metric, but isn't it the solution to the OT problem under a certain class of geodesic metric?}
As a result of these attractive properties, computational approaches to regularization with OT have found a variety of imaging applications such as incompressible fluid flow \cite{jordan1998variational, peyre2015entropic}, temporal dynamics of sparse signals \cite{Charles2017emd, bertrand2018earth}, and physical deformation in medical images \cite{karlsson2017generalized}.
%The OT modeling of mass transport may be considered as an Eulerian representation's proxy to optical flow methods \cite{lucas1981iterative, horn1981determining} in computer vision, which however operates in a fundamentally different \emph{Lagrangian} (i.e., particulated) representation.

%\npb{I think it would be helpful to define some variables here to avoid confusion. We have the signal state space size and the corresponding number of variables in the OT problem, and it may not be clear which we are referring to when we talk about solver complexity.}
Despite OT's favorable modeling capabilities, two major drawbacks exist.
%First, evaluating OT is computationally expensive because it is a linear program whose problem size scales quadratically with the size of its state-space $N$ (e.g., number of pixels).
First, evaluating OT is traditionally computationally expensive, limiting the state-space size $N$ (e.g., number of pixels) in practice.
%\npb{How about ``limiting practical useage to small state spaces''?}
In the most general formulation, solving OT amounts to solving an LP with $N^2$ variables and even efficient solvers (e.g., interior-point or simplex methods) have a computational complexity of $\bigO{N^3 \log N}$.
%Even the most efficient numerical methods (e.g., interior-point or simplex methods) have per iteration numerical costs on the order of at least $\bigO{N^3}$, thereby limiting its utility to relatively small state-spaces (i.e., $N \leq 1000$).
The recent computational advances in OT are partially attributed to the introduction of \emph{Sinkhorn distances} \cite{cuturi2013sinkhorn} as an efficient approximation method that trades-off accuracy for efficiency, reducing the per iteration costs to $\bigO{N^2}$ (see \cite{peyre2019computational} and references therein).
Unfortunately, this trade off can be potentially counterproductive in applications that demand both accuracy and speed.
% Under a more restrictive case where the underlying geodesic metric is assumed to be Euclidean, Beckmann's formulation \cite{beckmann1952continuous} offers a significantly smaller optimization space than Sinkhorn methods ($\bigO{N}$ versus $\bigO{N^2}$) without sacrificing accuracy. % Cuturi himself disputed this
Under a more restrictive case where the underlying geodesic metric is assumed to be Euclidean, Beckmann's formulation \cite{beckmann1952continuous} offers an attractive linear variable complexity similar to Sinkhorn methods but without sacrificing accuracy.
The Euclidean geodesic restriction is arguably a reasonable modeling assumption in imaging and video, by virtue of the success of optical flow methods which hold similar assumptions.
Recently, Beckmann's OT formulation has been solved with linear per-iteration costs \cite{li2018parallel, Ryu2017Unbalanced}, making them faster than even convolutional Sinkhorn approaches that have super-linear costs \cite{solomon2015convolutional}.

The second major drawback of the traditional OT formulation is a modeling deficiency: it restricts applications to the space of \emph{balanced} (i.e., fixed) mass systems (e.g., histograms and incompressible fluids), due to the assumption of conservation of mass.
When applied to images and video, this constraint is restrictive as intensity invariably changes (e.g., when non-spherical objects rotate or when luminescence varies).
To tackle this limitation, one may employ strategies such as \emph{partial} transport \cite{figalli2010optimal, caffarelli2010free}, which transports only a limited amount of mass between the systems, or \emph{unbalanced} transport \cite{chizat2016scaling, liero2018optimal}, which additionally models statistical properties of mass growth and decay to account for mass differences.
%However, such strategies are similarly limited by a large optimization space of $\bigO{N^2}$ as with most OT methods.
% Unbalanced transport tackles this limitation using strategies such as \emph{partial} transport \cite{figalli2010optimal, caffarelli2010free}, which transports only a limited amount of mass between the systems, or \emph{unbalanced} transport \cite{chizat2016scaling, liero2018optimal}, which additionally models statistical properties of mass growth and decay to account for mass differences. Despite this, unbalanced strategies are similarly limited by a large optimization space of $\bigO{N^2}$ as with most OT methods.

In this paper, we simultaneously tackle these two specific drawbacks to extend the applicability of optimal transport regularization to large-scale inverse imaging problems.
Our first contribution is a novel Beckmann formulation of the recent \emph{unbalanced} transport model \cite{chizat2016scaling}, which has more descriptive statistics than its partial transport counterpart.
This not only allows us to enrich OT's modeling capabilities, it also significantly reduces the number of variables from $\bigO{N^2}$ to $\bigO{N}$ making it tractable for large-scale imaging applications.
Our second contribution is a parallelizable and provably convergent iterative proximal algorithm for the proposed unbalanced-OT Beckmann formulation, facilitating compatibility with first-order solvers that are a modern staple in the image processing toolbox.
%This framework lets us use the proposed unbalanced-OT as a variational loss in optimization problems (where variability is w.r.t. the losses' arguments) thereby introducing tools in domains of variational OT \cite[\S9]{peyre2019computational} and inverse imaging.
This framework lets us use the proposed unbalanced-OT as a variational loss in optimization problems, thereby introducing a new powerful tool to the variational OT \cite[\S9]{peyre2019computational} and computational inverse imaging domains.
%Our second contribution is an efficient and parallelizable proximal algorithm for the proposed unbalanced-OT Beckmann formulation, which facilitates compatibility with the first-order solvers that are a modern staple in the image processing toolbox.
To illustrate the utility of our proposed approach, we demonstrate superior reconstruction and support recovery performance in two computational inverse imaging applications involving dynamic tracking:
(i)  an online pixel-tracking estimation algorithm similar to dynamical methods such as \cite{Asif2010DynamicUpdating, Charles2016DynamicFiltering, Charles2017emd, bertrand2018earth}, and
(ii) a novel dynamical tracking method that augments robust PCA \cite{candes2011robust} with our unbalanced-OT regularizer, along with a parallel ADMM solver for this problem.

\section{Background}

\subsection{Optimal Transport}

Optimal transport (OT) is a rich mathematical subject \cite{villani2003topics} which originated with Monge \cite{monge1781memoire}, and was further developed by Kantorovich \cite{kantorovich2006problem} into the commonly used \emph{Monge-Kantorovich} formulation.
%The problem is intuitively understood as the earth mover's problem \cite{rubner2000earth}, where one wishes to efficiently move a pile of earth from one configuration to another.
In its discretized form, it is a linear optimization program that minimizes the ``effort'' required to transport mass between histograms $\v{p}$ and $\v{q}$.
Transportation costs are defined using a \emph{ground cost} matrix $\v{C}$, whose entries specify the cost associated with moving one unit of mass between support locations (i.e., $C_{ij} = c(i,j)$, $\forall i \in \text{supp}(\v{p})$, and $\forall j \in \text{supp}(\v{q})$).
In this paper, we shall assume (without loss of generality) that $p$ and $q$ are supported over the same domain to ease notation.
%\npb{A technical nitpick: support is often defined as the (closure of the) smallest set in the domain which maps to a non-zero value. With this definition, we probably don't want to assume supp(p) = supp(q), right? Do we really want to say that $p$ and $q$ are defined over the same domain?} % John: Yes, I agree. Feel free to modify this. It shouldn't be sloppy maths. % npb: ��
Formally, the Monge-Kantorovich OT is defined as
\begin{equation} \label{eq:OT}
  \W (\v{p},\v{q}) =
  \min_{\v{P} \geq 0} \enspace
  \ip{\v{P}}{\v{C}}
  \quad \text{s.t.} \quad
  \v{P}\ones = \v{p},
  \v{P}^\top \ones = \v{q} ,
\end{equation}
where the notion of ``effort'' is defined as the product of mass and ground cost, as captured in the objective.
The inner product here denotes the Frobenius inner product (i.e., $\ip{\v{P}}{\v{C}} = \sum_{ij} P_{ij}C_{ij}$).
The matrix $\v{P} \in \reals^{N \times N}_+$ is a doubly stochastic matrix called the \emph{transport coupling}, whose entries $P_{ij}$ denote the amount of mass transported between locations $i$ and $j$ and whose marginals therefore equate to the given input histograms (per its constraints, where $\ones$ represents a vector of ones).
Let the $i,j$-th coordinate in $\v{p},\v{q}$'s support be denoted by $\delta_i,\delta_j$, respectively.
When $c(i,j) = (d(\delta_i,\delta_j))^p$ for a valid distance function $d(\cdot,\cdot)$, \eqref{eq:OT} is also known as the $p$-Wasserstein distance raised to the $p$-th power which we denote as $\W_p^p(\v{p},\v{q})$.

Although this definition of optimal transport has a meaningful geometric interpretation, it rules out a wide class of unbalanced signals (i.e., $\|\v{p}\|_1 \neq \|\v{q}\|_1$) such as images.
One could apply a na\"{\i}ve normalization strategy to ``force'' signals into the simplex but this comes at a risk of violating potentially important phenomenological considerations.
For example, consider a radar scenario where a target disappears from one frame to the next: normalization will arbitrarily and artificially increase energy of other targets to account for the one that has disappeared.
One well-known strategy to address this issue is called \emph{partial} transport \cite{figalli2010optimal, rubner2000earth}, which limits its transportation budget to only a fraction of mass in its arguments (i.e., $\ones^\top\v{P}\ones \leq \min ( \ones^\top\v{p} , \ones^\top\v{q} )$).
To convexify this constraint with respect to $\v{p}$ and $\v{q}$, a reformulation was proposed \cite{bertrand2018earth} to control the total transport budget:
\begin{equation}
\begin{aligned} \label{eq:Partial_OT}
  \U_{\mu} ( \v{p} , \v{q} )
%  = \min_{\v{P}} \enspace
%  & \ip{\v{P}}{\v{C}} \\
%  \text{s.t} \enspace
%  & \v{P}\ones \leq \v{p}, \enspace
%    \v{P}^\top\ones \leq \v{q}, \\
%  & \ones^\top\v{P}\ones \leq \min ( \ones^\top\v{p} , \ones^\top\v{q} ) \\
  = \min_{\v{P} \geq 0} \enspace
  & \ip{\v{P}}{\v{C}} - \mu \ones^\top \v{P} \ones\\
  \text{s.t} \enspace
  & \v{P}\ones \leq \v{p}, \enspace
    \v{P}^\top\ones \leq \v{q}, \\
  & \ones^\top\v{P}\ones \leq \ones^\top\v{p} , \enspace
    \ones^\top\v{P}\ones \leq \ones^\top\v{q} ,
\end{aligned}
\end{equation}
where the objective's second term regulates the transport budget according to parameter $\mu > 0$.
Another recent strategy~\cite{chizat2016scaling, liero2018optimal} additionally models statistical properties of unaccounted mass via a mechanism of growth and decay.
We define \emph{unbalanced} transport as
\begin{equation} \label{eq:Unbal_OT}
  \V_\mu (\v{p},\v{q})
  = \min_{\v{P} \geq 0}
    \ip{\v{P}}{\v{C}}
  + \mu ( \| \v{P}\ones - \v{p} \|_p^p + \| \v{P}^\top\ones - \v{q} \|_p^p ) .
\end{equation}
The terms weighed by parameter $\mu > 0$ penalize unaccounted mass between the marginals of the optimal transport coupling and the input arguments, which analogously regulates the transport budget.
Although \eqref{eq:Partial_OT} and \eqref{eq:Unbal_OT} both possess mechanism to regulate the transportation budget, the $\ell_p$-norms in \eqref{eq:Unbal_OT} explicitly model growth/decay statistics while \eqref{eq:Partial_OT} models only the gross growth/decay budget.
For this reason and the fact that \eqref{eq:Unbal_OT} has a much simpler optimization (much fewer constraints) than \eqref{eq:Partial_OT}, we focus here on an unbalanced transport strategy.
% For this reason, we advocate an unbalanced transport strategy, providing empirical evidence in Section~\ref{sssec:results_comparison_OT_schemes} showing its advantage in an inverse imaging tracking problem.

Optimal transport is notoriously expensive to compute since each evaluation is itself a linear optimization with $\bigO{N^2}$ variables (i.e., size of $\v{P}$).
Fortunately, Beckmann provided a reformulation of the optimal transport problem using an incompressible fluid interpretation \cite{beckmann1952continuous}.
A key assumption with Beckmann's reformulation is that ground costs are Euclidean (i.e., $d(\delta_i,\delta_j) = \| \delta_i - \delta_j \|_2$), corresponding to the 1-Wasserstein distance.
Under this geodesic, mass cannot teleport and must follow straight-line paths between sources and sinks.
As a result, mass transport can be modelled using a flux field, which dramatically reduces the number of optimization variables from $\bigO{N^2}$ to $\bigO{N}$.
Beckmann's discrete optimal transport formulation may be stated as
\begin{equation} \label{eq:OT_beckman}
  \Wbeck ( \v{p} , \v{q} ) = \min_{\v{M}} \enspace \| \v{M} \|_{2,1} \enspace
  \text{s.t.} \enspace \divergence (\v{M}) - \v{q} + \v{p} = 0 ,
\end{equation}
where $\v{M} \in \reals^{N \times D}$ denotes a (fluidic) flux field, with $D$ representing the number of spatial dimensions of the field.
For this paper, we consider (without loss of generality) applications in imaging, so we let $D = 2$ and reorganize the columns of $\v{M}$ into $\v{M}_x , \v{M}_y \in \reals^{n_x \times n_y}$ representing the flux fields travelling in each dimension (with $N = n_x n_y$).
The objective therefore penalizes the presence of flux as:
\begin{gather*}
    \| \v{M} \|_{2,1} 
    :=
    \sum_{k=1}^N \| \v{M}_{k,:} \|_2 
    \equiv
    \sum_{i=1}^{n_x} \sum_{j=1}^{n_y} \sqrt{ \v{M}_x^2[i,j] + \v{M}_y^2[i,j]} ,
\end{gather*}
where indexing notation between column-major vector subscript $k \in \{ 1,\dots,N \}$ and pixel-wise coordinate indices $i \in \{1,\dots,n_x\}$ and $j \in \{1,\dots,n_y\}$ using external brackets are interchanged according to context.
%$\| \v{M} \|_{2,1} = \sum_{k=1}^N \| \v{M}_{k,:} \|_2 = \sum_{i=1}^{n_x} \sum_{j=1}^{n_y} \sqrt{ \v{M}_x^2[i,j] + \v{M}_y^2[i,j]}$.
In the constraint, the divergence of $\v{M}$, notated as $\divergence(\v{M})$, measures how much a discrete point in the flux field is a source or a sink:
\begin{gather*}
  \divergence(\v{M})_{k} \equiv \divergence(\v{M}) [i,j] \\
  := ( \v{M}_x[i,j] - \v{M}_x[i-1,j] ) + ( \v{M}_y[i,j] - \v{M}_y[i,j-1] ) ,
\end{gather*}
with zero-flux boundary conditions (i.e., $\v{M}_x [i,j] = \v{M}_y [i,j] = 0$ if $i$ or $j$ lies outside the support).
The constraint therefore balances all sources and sinks.
We note that this formulation has recently been successfully applied in \cite{li2018parallel} for computing the OT cost between large scale images and in \cite{bertrand2018earth} as a dynamical tracking regularizer to estimate streaming measurements in online fashion.

\subsection{Proximal first order methods}

Proximal first order optimization techniques often possess efficient per-iteration costs (typically parallelizable), along with reasonable convergence rates (typically at least linear).
For this reason, they have become an indispensable tool in large scale applications such as inverse imaging \cite{Figueiredo2007GradientProjection, afonso2010fast}.
Some popular proximal first order approaches include alternating direction method of multipliers (ADMM) \cite{boyd2011distributed}, the Douglas-Rachford algorithm \cite{lions1979splitting}, and primal-dual methods \cite{chambolle2011first}.
Variable splitting \cite{lions1979splitting, eckstein1989splitting} and the proximal point algorithm \cite{rockafellar1976monotone} are fundamental building blocks that underlie proximal first-order methods, where the idea is to divide a problem, $f(\v{x}) = \sum_i f_i(\v{x}_i)$, into a series of easier subproblems that iteratively converge to the global fixed point.
Each of these subproblems $f_i$ are solved with \textit{proximal operators}, which are defined as
\begin{equation*}
  \proximal_{\rho f_i} ( \v{z} ) = \argmin{\v{z}'} \Big\{ f_i(\v{z}') + \frac{1}{2\rho} \norm{\v{z} - \v{z}'}_2^2 \Big\} ,
\end{equation*}
where $f_i$ is any proper, convex, closed function.
The quadratic $\ell_2$ term (weighed by $\rho > 0$) serves as a strongly convex regularizer that ties each of the subproblems to the original problem, while simultaneously smoothing the subproblem; $\rho$ can be interpreted as a scalar step size.
Due to the highly iterative nature of first order methods, it is of paramount importance for the proximal algorithms of each subproblem to be efficient (e.g., have closed form solutions, or/and have a separable form that allows it to be solved with parallel hardware such as \emph{general purpose graphic processing units} -- GPGPUs).
We refer the reader to \cite{parikh2014proximal} for an excellent introductory monograph to the topic.

%\todo{This section seems superflous to me since it is such general background not specific to the paper. Great for a thesis, but may be too much background here. In some sense, I think you could get away with just the the first 1-2 sentences and the last sentence to include references to the literature. I'm not sure where is would go. Maybe at the end of the previous OT section, saying that (like other imaging problems), OT has to worry about efficient numerical solvers. What do you all think?}
%\todo{JL: I think that this section is quite important because our proximal approach (a main contribution) is made possible because of first order proximal machinery. I think it's important enough to remain in the background.}
%\todo{NPB: I tend to agree with John. The proximal method framework is a central theme of the paper which appears in many of our contributions: the algorithm, theory, and applications. We also claim our prox formulation as a feature that makes this work distinct from Gangbo's.}

%%%%%%%%%%%%%%%%%%%%%%%%%%%%%%%%%%%%%%%%%%%%%%%%%%%%%%%%%%%%%%%%%%%%%%%%%%%%%%%%
%%%%%%%%%%%%%%%%%%%%%%%%%%%%%%%%%%%%%%%%%%%%%%%%%%%%%%%%%%%%%%%%%%%%%%%%%%%%%%%%
%%%%%%%%%%%%%%%%%%%%%%%%%%%%%%%%%%%%%%%%%%%%%%%%%%%%%%%%%%%%%%%%%%%%%%%%%%%%%%%%

\section{Unbalanced Optimal Transport Regularizer}

%In this section, we describe a novel Beckmann formulation of the unbalanced-OT, for a general class of inverse imaging problems, along with an accompanying iterative proximal algorithm for efficient implementation with proximal first order methods.
In this section, we describe our novel Beckmann formulation of the unbalanced-OT, along with an accompanying iterative proximal algorithm for efficient implementation with proximal first order methods.

\subsection{Unbalanced Optimal Transport in Beckmann's Formulation}

We present a novel \emph{unbalanced}-OT (UOT) program based upon \eqref{eq:Unbal_OT}, presented in the style of Beckmann's formulation.
This proposed approach combines the best of two worlds: (i) it models important statistical aspects about the unaccounted mass between unbalanced signals, and (ii) the efficiency of the Beckmann formulation allows scalability of our method to large datasets.
We define the Beckmann formulation for unbalanced optimal transport as
\begin{equation} \label{eq:Unbal_OT_beckman}
\begin{aligned}
  \Vbeck_\mu ( \v{p} , \v{q} ) = \min_{\v{M},\v{r}}
  &\enspace \| \v{M} \|_{2,1} + \mu \| \v{r} \|_p^p \\
  \text{s.t.}
  &\enspace \divergence (\v{M}) - \v{q} + \v{p} = \v{r} .
\end{aligned}
\end{equation}
The key difference between this formulation and \eqref{eq:OT_beckman} is that there now exists a \emph{transport residual} term $\v{r} \in \reals^N$ that reflects the amount of mass that needs to be created or destroyed (i.e., unaccounted flux divergence) to balance the equality constraint.
To prevent unabated growth and decay of $\v{r}$, we penalize its magnitude with an $\ell_p^p$ norm, where $p$ is chosen based on statistics of the signal's support.
For example, if $\v{p},\v{q}$ are assumed sparse (i.e., $\norm{\v{p}}_0 \approx \norm{\v{q}}_0 \ll N$), we select $p=1$ to reflect a kurtotic distribution over the transport residual's support, meaning that growth/decay can only occur at a sparse number of locations.
Transportation cost is balanced against the cost of growth/decay using parameter $\mu > 0$.
We remark that in the same way that \eqref{eq:Unbal_OT} is a generalization of \eqref{eq:OT}, \eqref{eq:Unbal_OT_beckman} is also a generalization of \eqref{eq:OT_beckman}: a large $\mu$ drives the transport residual to a small value, therefore $\V_{\mu\to\infty} \to \W$. We also note that compared to partial transport \cite{Bertrand2018emd}, its single equality constraint makes it significantly easier to solve numerically compared to multiple inequality constraints.

\subsection{Proximal Operator of Unbalanced Optimal Transport}
\label{ssec:UOTprox}

Due to the critical role that proximal operators play in enabling first order solvers, we turn our attention to deriving this quantity for $\Vbeck_\mu$. This derivation will lead to the proposal of an associated efficient numerical solver and associated convergence guarantees.

The proximal operator $\proximal_{\rho\Vbeck}^{\mu} : (\reals^N \times \reals^N) \mapsto (\reals^N \times \reals^N)$ of $\Vbeck_\mu$ is defined as
% \npb{An aesthetic choice: do we want operator sub/super scripts to appear on bottom/top, or lower/upper right? I've changed the prox here to the latter, but if this is the preferred look, we should do the same for subsequent operators like shrink.} % John: I've updated the command to make it consistent. agree.. lower/upper right is more of a standard notation.
\begin{equation} \label{eq:Prox_UOT}
\begin{aligned}
  &\proximal_{\rho\Vbeck}^{\mu} ( \v{p}_0 , \v{p}_1 ) \\
  &= \argmin{\v{x}_0,\v{x}_1 \geq 0}
  \Vbeck_\mu ( \v{x}_0 , \v{x}_1 ) + \frac{1}{2\rho} \norm{\begin{bmatrix}\v{x}_0\\\v{x}_1\end{bmatrix} - \begin{bmatrix}\v{p}_0\\\v{p}_1\end{bmatrix}}_2^2 \\
  &= \argmin{\v{x}_0,\v{x}_1 \geq 0, \v{M}, \v{r}}
  \| \v{M} \|_{2,1} + \mu \| \v{r} \|_p^p + \frac{1}{2\rho} \norm{\begin{bmatrix}\v{x}_0\\\v{x}_1\end{bmatrix} - \begin{bmatrix}\v{p}_0\\\v{p}_1\end{bmatrix}}_2^2 \\
  &\qquad\quad \text{s.t.} \enspace \divergence (\v{M}) - \v{x}_0 + \v{x}_1 = \v{r} .
\end{aligned}
\end{equation}
This objective is strongly convex and not everywhere infinite, so according to proximal operator theory it has a unique minimizer for every $(\v{p}_0,\v{p}_1)$. Recent work~\cite{li2018parallel} demonstrated that Chambolle-Pock's first order primal-dual method \cite{chambolle2011first} efficiently evaluates the \emph{balanced} OT problem \eqref{eq:OT_beckman}.
We extend this result and develop an efficient iterative algorithm to compute the \emph{proximal operator} of the \emph{unbalanced} OT problem \eqref{eq:Prox_UOT}.
This work, however, departs from \cite{li2018parallel} in two ways.
First, our work allows the arguments of the (unbalanced) OT problem to themselves become optimization variables, thus expanding the applicability of Beckmann OT to the domain of inverse imaging problems.
Second, we propose to exploit inexact and warm-start strategies that are crucial for scalability in inverse applications, allowing 2-3 orders of magnitude fewer iterations vis-\`a-vis the standalone Beckmann problem (i.e., $<10$ compared to $10^3$).
We note that in the final preparation of this paper, preliminary work describing a similar unbalanced-OT formulation in isolation (i.e., not as part of the general proximal framework we propose here) was concurrently developed~\cite{gangbo2019unnormalized}.
The Lagrangian of \eqref{eq:Prox_UOT} is:
\begin{equation}
\begin{aligned}
  \mathcal{L} (\v{M},\v{r},\v{x},\v{a}) = \enspace
  &\| \v{M} \|_{2,1} + \mu \| \v{r} \|_p^p + \frac{1}{2\rho} \norm{ \v{x}-\v{p} }_2^2 \\
  &+ \ind{+}(\v{x}) + \ip{\v{a}}{\divergence (\v{M}) + \v{A}\v{x} - \v{r}} ,
\end{aligned}
\end{equation}
where $\v{x} = \begin{bmatrix}\v{x}_0\\\v{x}_1\end{bmatrix}, \v{p} = \begin{bmatrix}\v{p}_0\\\v{p}_1\end{bmatrix}$, $\v{A} = [-\v{I},\v{I}]$, $\ind{+}$ is an indicator function of the non-negative orthant, and $\v{a}\in\reals^N$ is a Lagrange multiplier.
The saddle point to
\begin{equation}
\label{eqn:saddle}
  \min_{\v{M},\v{x},\v{r}} 
  \max_{\v{a}} \enspace
  \mathcal{L} (\v{M},\v{r},\v{x},\v{a})
\end{equation}
solves \eqref{eq:Prox_UOT}, for which the primal-dual method of Chambolle and Pock generates the following convergent sequence:
\begin{equation}
\begin{aligned} \label{eq:Prox_UOT_M}
  \v{M}\iternext \leftarrow \argmin{\v{M}}
  &\| \v{M} \|_{2,1} + \ip{\v{a}\itercur}{\divergence (\v{M})} \\
  &+ \frac{1}{2\tau_1} \norm{\v{M}-\v{M}\itercur}_F^2
\end{aligned}
\end{equation}
\begin{equation}
\begin{aligned} \label{eq:Prox_UOT_x}
  \v{x}\iternext \leftarrow \argmin{\v{x} \geq 0}
  &\frac{1}{2\rho}\| \v{x}-\v{p} \|_2^2 + \ip{\v{a}\itercur}{\v{A}\v{x}} \\
  &+ \frac{1}{2\tau_1} \norm{\v{x}-\v{x}\itercur}_2^2
\end{aligned}
\end{equation}
\begin{equation}
\begin{aligned} \label{eq:Prox_UOT_r}
  \v{r}\iternext \leftarrow \argmin{\v{r}}
  &\mu \| \v{r} \|_p^p + \ip{\v{a}\itercur}{-\v{r}} \\
  &+ \frac{1}{2\tau_1} \norm{\v{r}-\v{r}\itercur}_2^2
\end{aligned}
\end{equation}
\begin{equation} \label{eq:Prox_UOT_a}
  \v{a}\iternext \leftarrow \argmax{\v{a}}
  \ip{\v{a}}{\v{b}\iternext} - \frac{1}{2\tau_2} \norm{\v{a}-\v{a}\itercur}_2^2 ,
\end{equation}
where $\v{b}\iternext = 2 \mathcal{K}(\v{M}\iternext,\v{x}\iternext,\v{r}\iternext) - \mathcal{K}(\v{M}\itercur,\v{x}\itercur,\v{r}\itercur)$ and $\mathcal{K}(\v{M},\v{x},\v{r}) = \divergence (\v{M}) + \v{A}\v{x} - \v{r}$.

Updates \eqref{eq:Prox_UOT_M}-\eqref{eq:Prox_UOT_a} all have standard closed-form proximal algorithms \cite{parikh2014proximal}.
For \eqref{eq:Prox_UOT_M}, we exploit the fact that the program is row-wise separable.
Denoting each row of $\v{M}\iternext$ as $\v{m}\iternext_i \in \mathbb{R}^2, \forall i=\{1,\dots,N\}$,  we apply the $\ell_{2,1}$-norm proximal algorithm (vector-soft shrinkage operator) on each of its rows:
\begin{equation} \label{eq:Prox_UOT_M_prox}
  %\argmin{\v{m}_i} \tau_1 \| \v{m}_i \|_2 + \frac{1}{2} \norm{\v{m}_i - ( \v{m}\itercur_i - \tau_1 \divergence^\ast(\v{a}\itercur)_i) }_2^2 \\
  \v{m}\iternext_i = \shrink^{\ell_2}_{\tau_1} ( \v{m}\itercur_i - \tau_1 \divergence^\ast(\v{a}\itercur)_i ) ,
\end{equation}
where $\divergence^\ast(\cdot) : \reals^{N} \mapsto \reals^{N \times 2}$ refers to the adjoint of the $\divergence(\cdot)$ operator and where $\shrink^{\ell_2}_\sigma (\v{q}) = \max \{ \norm{\v{q}}_2 - \sigma , 0 \} \odot \frac{\v{q}}{\norm{\v{q}}_2}$.
Next, \eqref{eq:Prox_UOT_x} applies a projection unto the non-negative orthant:
\begin{equation}
  \v{x}\iternext = \Pi_+\Big( \frac{\rho\tau_1}{1+\rho\tau_1}\v{p} + \frac{1}{1+\rho\tau_1} \big(\v{x}\itercur - \tau_1 \v{A}^\top\v{a}\itercur \big) \Big) ,
\end{equation}
where $\Pi_+(\v{q}) = \max \{ \v{q} , 0\}$.
For \eqref{eq:Prox_UOT_r}, we let $p=1$ here to apply a linear penalty on mass growth/decay.
Applying an $\ell_1$ shrinkage operator yields:
\begin{equation}
  \v{r}\iternext = \shrink^{\ell_1}_{\mu\tau_1} ( \v{r}\itercur + \tau_1 \v{a}\itercur ),
\end{equation}
where $\shrink^{\ell_1}_\sigma (\v{q}) = \sign(\v{q}) \odot \max \{ \abs{\v{q}} - \sigma , 0 \}$.
We remark that the update step for $\v{r}$ with $p=2$ may be alternatively derived to be an averaging update.
Lastly, \eqref{eq:Prox_UOT_a} is a simple projection:
\begin{equation}
  \v{a}\iternext = \v{a}\itercur + \tau_2 \v{b}\iternext .
\end{equation}
The full algorithm is presented in Algorithm \ref{algo:UOTProx}.

\begin{algorithm}
\caption{Unbalanced Beckmann OT Proximal Algorithm.}
\label{algo:UOTProx}
\begin{algorithmic}[1]
\Require $\v{p}$, $\v{M}^{0}$, $\v{x}^{0}$, $\v{r}^{0}$, $\v{a}^{0}$, $\mu$, $\rho$, $\tau_1$, $\tau_2$
\Ensure $\v{M}\iternext,\v{x}\iternext,\v{r}\iternext,\v{a}\iternext$
\State $k = 1$
%\State $\mathcal{K}(\v{M},\v{x},\v{r}) := \divergence(\v{M}) + \v{A}\v{x} - \v{r}$
\While {\text{not converged}}
    \State $\v{m}\iternext_i \leftarrow \shrink^{\ell_2}_{\tau_1} ( \v{m}\itercur_i - \tau_1 \divergence^\ast(\v{a}\itercur)_i ), \enspace \forall i$
    \State $\v{x}\iternext \leftarrow \Pi_+ \big( \frac{\rho\tau_1}{1+\rho\tau_1}\v{p} + \frac{1}{1+\rho\tau_1} (\v{x}\itercur - \tau_1 \v{A}^\top\v{a}\itercur ) \big)$
    \State $\v{r}\iternext \leftarrow \shrink^{\ell_1}_{\mu\tau_1} ( \v{r}\itercur + \tau_1 \v{a}\itercur )$
    \State $\v{b}\iternext \leftarrow 2 \mathcal{K}(\v{M}\iternext,\v{x}\iternext,\v{r}\iternext) - \mathcal{K}(\v{M}\itercur,\v{x}\itercur,\v{r}\itercur)$
    \State $\v{a}\iternext \leftarrow \v{a}\itercur + \tau_2 \v{b}\iternext$
    \State $k \leftarrow k + 1$
\EndWhile
\end{algorithmic}
\end{algorithm}

Although an explicit parallel implementation (e.g., GPU) is beyond the scope of this paper, we explain key parallel-implementation considerations of this algorithm.
Similar concepts have been presented in \cite{li2018parallel} but are included here for completeness.
The $\divergence$ operator (applied in lines 3 and 6) is a sparse operator that couples pixels, necessitating points of synchronization at each proximal iteration.
For example, after $\divergence^*(\v{a}\itercur)$ is synchronously computed in line 3, lines 3 thru 5 (which are themselves element-wise separable) can be computed in parallel. The memory complexity of this algorithm is also linear, since the memory-bottleneck (the $\divergence$ operator) has an implicit implementation.
These properties make this algorithm amenable for parallelization.

Next, we discuss two aspects of run time that are important in practical applications: the per-iteration computational cost and the required number of iterations.
The computational bottleneck is due to the $\divergence$ operator which can be computed in linear time since it is a sparse matrix containing $\leq 4$ nonzero entries per row.  Thus, the per-iteration cost of this algorithm is linear.
Algorithm~\ref{algo:UOTProx} is intended to be used as a part of a larger solver (e.g., Algorithm~\ref{algo:UOTDF} or Algorithm~\ref{algo:RPCA_UOTDF}), where its proximal iterations are inner loops within outer loops of the larger solver.
Since a large number of inner loops is undesirable, we propose to apply warm-starts and early-termination strategies. We have found these approaches to significantly reduce the required number of iterations (demonstrated in Section \ref{sssec:prox_efficiency}), allowing its practical run-time to be linear (verified in Section~\ref{sssec:UOTDF_ADMM}).
% We note that this method was found to be very efficient at solving an equality-constrained subproblem, compared to an alternative strategy involving a series of affine projections (c.f. \cite[\S 6.2.2]{parikh2014proximal}).

We also describe the special case of \eqref{eq:Prox_UOT} when one of its arguments are constant. Let us define the proximal operator $\proximal_{\rho\Vbeck_{\v{s}}}^\mu : \reals^N \mapsto \reals^N$ of $\Vbeck_\mu$ with a constant $\v{s}\in\reals^N$ in the first argument as
\begin{equation}
    \label{eq:Prox_UOT_specialcase}
    \proximal_{\rho\Vbeck_{\v{s}}}^\mu (\v{p}) =
    \argmin{\v{x}} \Vbeck_\mu(\v{s},\v{x}) + \frac{1}{2\rho} \norm{\v{x}-\v{p}}_2^2 .
\end{equation}
Since this is solved in a similar fashion as above, Algorithm~\ref{algo:UOTProx} is applied with only with a difference in $\v{A}$ and $\mathcal{K}$:
\begin{equation*}
\begin{aligned}
    \v{A} := \v{I} , \quad\quad
    \mathcal{K}(\v{M},\v{x},\v{r}) := \divergence(\v{M}) - \v{s} + \v{x} - \v{r} . 
\end{aligned}
\end{equation*}

We conclude this section with an analytic guarantee that  specifies step size conditions for the convergence of Algorithm~\ref{algo:UOTProx} to the saddle point of \eqref{eqn:saddle}.
\begin{theorem}[Convergence guarantee]
\label{thm}
Suppose $\tau_1 \tau_2 < \frac{1}{\lambda_{\max}(\nabla^2) + 3}$ where $\lambda_{\max}(\nabla^2)$ is the largest eigenvalue of the discrete Laplacian operator, $\nabla^2$.
Then the steps in Algorithm~1 produce a series which converges to the saddle point of the Lagrangian, i.e.,
\[
  (\v{M}^k, \v{r}^k, \v{x}^k, \v{a}^k) \to (\v{M}^\star, \v{r}^\star, \v{x}^\star, \v{a}^\star),
\]
where $(\v{M}^\star, \v{r}^\star, \v{x}^\star, \v{a}^\star)$ is a solution of \eqref{eqn:saddle}.
\end{theorem}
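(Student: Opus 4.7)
The plan is to recognize Algorithm~\ref{algo:UOTProx} as an instance of the Chambolle--Pock primal-dual method \cite{chambolle2011first} applied to the saddle-point problem \eqref{eqn:saddle}, and then invoke its standard convergence guarantee. The whole argument then reduces to certifying that the step-size hypothesis of that theorem is exactly what has been stated.

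First, I would cast the Lagrangian in the canonical primal-dual form
\begin{equation*}
  \min_{\v{z}} \max_{\v{a}} \enspace G(\v{z}) + \ip{K\v{z}}{\v{a}} - F^\ast(\v{a}),
\end{equation*}
with joint primal variable $\v{z} = (\v{M},\v{x},\v{r})$, block-separable proper convex lower-semicontinuous term
\begin{equation*}
  G(\v{M},\v{x},\v{r}) = \norm{\v{M}}_{2,1} + \ind{+}(\v{x}) + \tfrac{1}{2\rho}\norm{\v{x}-\v{p}}_2^2 + \mu\norm{\v{r}}_p^p,
\end{equation*}
linear coupling operator $K(\v{M},\v{x},\v{r}) = \divergence(\v{M}) + \v{A}\v{x} - \v{r}$, and $F^\ast \equiv 0$ on the dual side (so that the max over $\v{a}$ enforces the equality constraint $K\v{z} = \v{0}$). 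Because $G$ is block-separable, the three per-block updates \eqref{eq:Prox_UOT_M}--\eqref{eq:Prox_UOT_r} jointly realize the single proximal step $\proximal_{\tau_1 G}(\v{z}\itercur - \tau_1 K^\ast \v{a}\itercur)$, while $\v{b}\iternext = 2 K \v{z}\iternext - K \v{z}\itercur$ in \eqref{eq:Prox_UOT_a} is the standard Chambolle--Pock over-relaxation $K(2\v{z}\iternext - \v{z}\itercur)$. Algorithm~\ref{algo:UOTProx} therefore coincides with Algorithm~1 of \cite{chambolle2011first}, whose Theorem~1 guarantees convergence of the whole iterate sequence to a saddle point whenever $\tau_1\tau_2\|K\|^2 < 1$ (a saddle point exists because $G$ is coercive on the affine set $\{K\v{z}=\v{0}\}$).

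It then remains to show $\|K\|^2 \le \lambda_{\max}(\nabla^2) + 3$, for which I would evaluate $KK^\ast$ on the dual variable. Using $\v{A}^\top = [-\v{I};\v{I}]$ and the adjoint relations, a short calculation yields
\begin{equation*}
  KK^\ast \v{a} = \divergence\,\divergence^\ast \v{a} + \v{A}\v{A}^\top \v{a} + \v{a} = \divergence\,\divergence^\ast \v{a} + 3\v{a},
\end{equation*}
since $\v{A}\v{A}^\top = 2\v{I}$. Under the zero-flux boundary convention declared after the definition of $\divergence$, the composition $\divergence\,\divergence^\ast$ is the positive-semidefinite discrete Laplacian, with operator norm $\lambda_{\max}(\nabla^2)$. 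The triangle inequality for operator norms then gives $\|K\|^2 = \|KK^\ast\|_{\mathrm{op}} \le \lambda_{\max}(\nabla^2) + 3$, and combining this with the step-size hypothesis of the theorem yields $\tau_1\tau_2\|K\|^2 < 1$, as required.

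The step I expect to require the most care is the identification of $\divergence\,\divergence^\ast$ with the PSD discrete Laplacian $\nabla^2$ under the specific zero-flux boundary conventions used in the paper: a different sign or boundary treatment would shift the spectral quantity appearing in the step-size condition, so this is where the concrete discretization actually enters. Once that identification and the block-separability of $\proximal_{\tau_1 G}$ are nailed down, the remainder is a direct appeal to the Chambolle--Pock convergence theorem, so the essential mathematical content of the argument is the single operator-norm bound $\|KK^\ast\| \le \lambda_{\max}(\nabla^2) + 3$.
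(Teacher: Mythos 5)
Your proposal is correct and follows essentially the same route as the paper: both reduce the claim to verifying the hypotheses of Theorem~1 of Chambolle--Pock for the saddle-point form $G(\v{M},\v{x},\v{r}) + \ip{\v{a}}{K(\v{M},\v{x},\v{r})} - F(\v{a})$ with $F \equiv 0$, and both establish the step-size condition via the bound $\lambda_{\max}(K^\ast K) = \lambda_{\max}(\nabla^2) + 3$. The only cosmetic difference is that you compute $KK^\ast = \divergence\divergence^\ast + 3\v{I}$ directly (which in fact gives equality, not just the inequality you state), whereas the paper reaches the same spectral shift by repeatedly applying an eigenvalue identity for appending $\pm\v{I}$ blocks; you also spell out the existence of a saddle point and the identification of the updates with the Chambolle--Pock iteration, which the paper leaves implicit.
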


\begin{proof}
We proceed by showing that the conditions of Theorem~1 in \cite{chambolle2011first} are satisfied.
First, note that we may rewrite \eqref{eq:Prox_UOT} as
\[
\mathcal{L}(\v{M}, \v{r}, \v{x}, \v{a}) = G(\v{M}, \v{r}, \v{x}) + \ip{\v{a}}{\v{K}\v{b}} - F(\v{a}),
\]
where $G(\v{M}, \v{r}, \v{x}) = \norm{\v{M}}_{2,1} + \mu\norm{\v{r}}_p^p + \frac{1}{2\rho}\norm{\v{x}-\v{p}}_2^2 + \ind{+}(\v{x})$, $\v{K} = [\div, -\v{I}, \v{I}, -\v{I}]$, $\div$ is the matrix corresponding to the divergence operator, $\v{b} = [\vec(\v{M}); \v{x}; \v{r}]$ and $F(\v{a}) = 0$.
The functions $G$ and $F$ are proper, convex, and lower semi-continuous, and $K$ is a linear operator.
Note that if $\lambda$ is an eigenvalue of a matrix $\v{B}^* \v{B}$, then $\lambda + 1$ is an eigenvalue of the matrix $\alt{B}^* \alt{B}$ where $\alt{B} = [\v{B}, \pm \v{I}]$ (this is easily verified by noting that if $\v{v}$ is the corresponding eigenvector of $\v{B}^* \v{B}$, then $[\lambda \v{v}; \pm\v{Bv}]$ is the corresponding eigenvector of $\alt{B}^*\alt{B}$).
By repeated application of this identity, it follows that $\lambda_{\max}(\v{K}) = \lambda_{\max}(\div^* \div) + 3$.
Since $\div \div^*$ is the discrete Laplacian operator, $\lambda_{\max}(\nabla^2) + 3$ is the maximum eigenvalue of $\v{K}$.
Thus, the conditions of \cite[Theorem 1]{chambolle2011first} are satisfied when $\tau_1 \tau_2 < \frac{1}{\lambda_{\max}(\nabla^2) + 3}$.
\end{proof}

%%%%%%%%%%%%%%%%%%%%%%%%%%%%%%%%%%%%%%%%%%%%%%%%%%%%%%%%%%%%%%%%%%%%%%%%%%%%%%%%
%%%%%%%%%%%%%%%%%%%%%%%%%%%%%%%%%%%%%%%%%%%%%%%%%%%%%%%%%%%%%%%%%%%%%%%%%%%%%%%%
%%%%%%%%%%%%%%%%%%%%%%%%%%%%%%%%%%%%%%%%%%%%%%%%%%%%%%%%%%%%%%%%%%%%%%%%%%%%%%%%

\section{Applications in Dynamical Tracking}

In this section, we demonstrate our unbalanced-OT model as a regularizer in dynamical tracking applications \cite{Charles2016DynamicFiltering, Charles2017emd, Bertrand2018emd, bertrand2018earth} that utilize novel models to encourage continuity between snapshots of time-varying data.
We represent snapshots of target activations with the state vector $\v{s}_t \in \reals^N$ where $t$ is the time index.
Targets are assumed mobile and $K$-sparse over a gridded support where $\v{s}_t$ evolves dynamically in time and space via a first-order Markovian relationship through some function $f_t$
\begin{equation*}
  \v{s}_t = f_t ( \v{s}_{t-1} ) + \v{\nu}_t ,
\end{equation*}
where $\v{\nu}_t \in \reals^{N}$ captures dynamical innovations (i.e., model mismatch).

We apply our unbalanced-OT regularizer in two dynamical tracking settings:
(i) causal tracking with streaming measurements in Section~\ref{ssec:BPDN_online_setting} and
(ii) robust principal components (RPCA) \cite{candes2011robust} for batch settings in Section~\ref{ssec:RPCA_batch_setting}.
% For the more data-intensive RPCA problem, we describe an efficient first-order solver using the proximal algorithm proposed in Section~\ref{ssec:UOTprox}.

%\npb{Much of the text in the intro/background of this section seems quite specific to the content of subsection A. Would it make sense to keep the intro brief and put the signal model/Kalman filter background into section A?} % John: I somewhat agree although there's definitely background overlap relevant to IV.A and IV.B when we discuss the dynamical tracking topic.

%\todo{Chris: Need to clarify (as I pointed out in the intro/background too). Is what you've done specific to dynamics, or is it just an example.}

%%%%%%%%%%%%%%%%%%%%%%%%%%%%%%%%%%%%%%%%%%%%%%%%%%%%%%%%%%%%%%%%%%%%%%%%%%%%%%%%
%%%%%%%%%%%%%%%%%%%%%%%%%%%%%%%%%%%%%%%%%%%%%%%%%%%%%%%%%%%%%%%%%%%%%%%%%%%%%%%%

\subsection{Online dynamical estimation with UOT-regularization}
\label{ssec:BPDN_online_setting}

%%
%More generally, we denote a batch of time-windowed data of length $T$ in matrix form $\v{S} \in \reals^{N \times T}_+$ whose columns are $[\v{s}_1,\dots,\v{s}_T]$.
%Observations $\v{Y} \in \reals^{M \times T}$ (whose columns are $[\v{y}_1, \dots, \v{y}_T]$) are collected through a set of sensing matrices $\{\v{\Phi}_t \in \reals^{M \times N}\}_{t=1}^T$ where each time-step is measured as
For our first application, we consider the problem of dynamically tracking $\v{s}_t$ from streaming measurements
\begin{equation*}
    \v{y}_t = \v{\Phi}_t \v{s}_t + \v{\eta}_t ,
\end{equation*}
where $\v{\eta}_t \in \reals^M$ represents an iid Gaussian noise model.
The goal in this tracking problem is to accurately recover $\v{s}_t$ at each time step given $\v{\Phi}_t$ and $\v{y}_t$. 
%The goal in dynamical tracking is to accurately recover $\v{S}$ given $\{\v{\Phi}_t\}_{t=1}^T$ and $\v{Y}$.

Classical dynamical tracking methods like the Kalman filter \cite{Kalman1960NewApproach} exploit temporal structure for estimation and inference.
Under a probabilistic \emph{maximum a posteriori} framework, the Kalman filter provides the optimal\footnote{Under Gaussian assumptions on the measurement and dynamical noise.} estimate of the current signal under a linear dynamical function (i.e., $f_t(\v{s}_{t-1}) = \v{F}_t \v{s}_{t-1}$).
Concisely stated, the Kalman filter solves
\begin{equation*}
  \hat{\v{s}}_t 
  =
  \argmin{\v{s}} \| \v{y}_t - \v{\Phi}_t \v{s} \| 
  + \| \v{s} - \v{F}_t \hat{\v{s}}_{t-1} \| ,
\end{equation*}
which balances an observation term and a prediction term (using norms that capture noise statistics).
%This framework was recently combined with sparsity inference methods \cite{tibshirani1996regression} to yield more sophisticated dynamical filtering methods that rely instead on Laplacian statistics \cite{Charles2016DynamicFiltering}.
The $\ell_p$ losses are only a sensible design choice  under a \emph{Lagrangian} state-space representation where each state variable $\v{s}_t[i]$ is itself moving through a geometric space (e.g., displacement coordinates of a GPS sensor).
A design flaw arises if the state space is \emph{Eulerian}-represented (i.e., the support space is gridded and signals move across its support).
When displacement-variations are expected, it makes sense to pay penalties that are proportional to support-displacement error, yet $\ell_p$-norm penalties are invariant to this, rendering $\ell_p$-norms ineffective at support estimation.
In spite of this, \emph{Eulerian}-represented signals are still of interest because they are extremely effective when signal-support statistics are known \emph{a priori} (e.g., sparsity statistics \cite{tibshirani1996regression, Charles2017emd}).
%Optimal transport priors have recently been proposed \cite{Bertrand2018emd, karlsson2017generalized} to circumvent support-estimation issues, since it inherently accounts for the geometry of an \emph{Eulerian} representation's support.
The unbalanced-OT model is a suitable regularizer that circumvents support-estimation issues, since it inherently accounts for the geometry of an \emph{Eulerian} representation's support.
We note that optimal transport priors have recently been proposed in inverse imaging problems \cite{karlsson2017generalized, abraham2017tomographic} as well as prior work in dynamical tracking \cite{Charles2017emd, Bertrand2018emd, bertrand2018earth}.

We propose a least-squares dynamic filtering formulation with unbalanced-OT regularization (UOT-DF) under assumptions of first-order Markovian dynamics:
\begin{equation} \label{eq:UOT-DF}
  \hat{\v{s}}_t =
  \argmin{\v{s} \geq 0}
    \frac{1}{2} \norm{\v{y}_t - \v{\Phi}_t \v{s}}_2^2
  + \kappa \Vbeck_\mu ( \v{s} , \tilde{\v{s}} ) ,
\end{equation}
where parameter $\kappa > 0$ balances between data fidelity (first term) and dynamics (second term). Here, $\tilde{\v{s}} = f_t(\hat{\v{s}}_{t-1}) \in \reals^{N}$ represents the prediction of the current state, formed by propagating the previous state forward in time via the dynamical process $f_t$.
The simplicity of this formulation helps us better understand the behavior of different types of OT-regularization; specifically, we characterize differences between balanced-OT and unbalanced-OT in Section~\ref{sssec:results_comparison_OT_schemes}.

We also propose a formulation based on basis pursuit de-noising (BPDN) \cite{Donoho2005SparseNonnegative}, where a sparsity regularizer is additionally included into the objective of \eqref{eq:UOT-DF}. We call this formulation the unbalanced-OT regularized BPDN (BPDN+UOT-DF)
\begin{equation}
    \label{eq:BPDN+UOT-DF}
    \hat{\v{s}}_{t} = \argmin{\v{s} \geq 0} 
    \frac{1}{2} \norm{\v{y}_t - \v{\Phi}_t \v{s} }_2^2
  + \lambda \norm{\v{s}}_1 
  + \kappa \Vbeck_{\mu} (\v{s},\tilde{\v{s}}) ,
\end{equation}
where parameters $\lambda,\kappa>0$ balance the ratio between the objective's terms.
This formulation is a natural progression from prior art in the sparse tracking literature \cite{Charles2016DynamicFiltering, Charles2017emd, Bertrand2018emd}, which we compare BPDN+UOT-DF against in Section~\ref{sssec:results_literature_comparison}.

%One key benefit of this problem framing is that the algorithm is online, meaning that it does not require the collection of an entire batch of data each time the estimation process is performed.
%\npb{We say that dynamics are modeled with $f$ but it doesn't show up in the optimization. Also, should we be careful about saying that this optimization results from Bayesian MAP? E.g., what is the prior that yields the OT regularizer?}

%%%%%%%%%%%%%%%%%%%%%%%%%%%%%%%%%%%%%%%%%%%%%%%%%%%%%%%%%%%%%%%%%%%%%%%%%%%%%%%%

\subsubsection{ADMM solver for UOT-DF}
\label{sssec:UOTDF_ADMM}

First, we describe in detail how our unbalanced-OT Beckmann proximal algorithm may be applied to efficiently solve problems \eqref{eq:UOT-DF} or \eqref{eq:BPDN+UOT-DF} within an ADMM framework \cite{boyd2011distributed}. 
Consider the general problem
\begin{equation} \label{eq:general_UOT-DF}
  \min_{\v{s} \geq 0}
    \frac{1}{2} \norm{\v{y} - \v{\Phi} \v{s}}_2^2
  + \mathcal{R}(\v{s})
  + \kappa \Vbeck_\mu ( \v{s} , {\v{s}}_{0} ) ,
\end{equation}
where $\mathcal{R}$ refers to a proper, convex, closed function that represents a prior (e.g., $\ell_1$-norm) describing structure in $\v{s}$ (e.g., sparsity).
We apply a variable splitting procedure to formulate an equivalent problem
\begin{equation*}
  \min_{\v{s} = \v{x} = \v{z}}
    \frac{1}{2} \norm{\v{y} - \v{\Phi} \v{x}}_2^2
  + \mathcal{R}(\v{s}) + \iota_+(\v{s})
  + \kappa \Vbeck_\mu ( \v{z} , {\v{s}}_{0} ) ,
\end{equation*}
where $\v{x},\v{z}$ are splitting variables.
Its \textit{augmented} Lagrangian is formulated as
\begin{equation*}
\begin{split}
    \mathcal{L}(\v{s},\v{x},\v{z},\v{a},\v{b}) = 
    \frac{1}{2} \norm{\v{y} - \v{\Phi} \v{x}}_2^2
  + \mathcal{R}(\v{s}) + \iota_+(\v{s}) \\
  + \kappa \Vbeck_\mu ( \v{z} , {\v{s}}_{0} )
  + \frac{\rho}{2} \Big(
    \norm{\v{x}-\v{s}+\v{a}}_2^2
  + \norm{\v{z}-\v{s}+\v{b}}_2^2
  \Big) ,
\end{split}
\end{equation*}
where $\v{a},\v{b}$ here are dual variables.
From this, ADMM generates a convergent sequence of iterative updates by successively solving each primal variable and taking a gradient step in the dual space.
Letting the iteration index be expressed with superscript $k$, we have the following updates.
\begin{equation*}
\begin{aligned}
  \v{s} \iternext
  & \leftarrow \argmin{\v{s}}
    \mathcal{R}(\v{s})
  + \iota_+(\v{s})
  + \frac{\rho}{2} \norm{\v{s} - (\v{x}\itercur + \v{a}\itercur) }_2^2 \\
  & = \proximal_{\mathcal{R}+}^{\rho} (\v{x}\itercur +\v{a}\itercur) .
\end{aligned}
\end{equation*}
In this work, we are primarily concerned with two cases
(i) UOT-DF \eqref{eq:UOT-DF} where $\mathcal{R}(\v{s}) = 0$ which results in the element-wise operator $\proximal_{\mathcal{R}+}^{\rho} (\v{q}) = \Pi_+(\v{q})$, and
(ii) BPDN+UOT-DF \eqref{eq:BPDN+UOT-DF} $\mathcal{R}(\v{s}) = \lambda \norm{\v{s}}_1$ which results in the element-wise operator $\proximal_{\mathcal{R}+}^{\rho} (\v{q}) = \Pi_+(\v{q}-\rho/\lambda)$.
\begin{equation*}
\begin{aligned}
  \v{x} \iternext
  & \leftarrow \argmin{\v{x}}
    \frac{1}{2} \norm{\v{y} - \v{\Phi} \v{x}}_2^2
  + \frac{\rho}{2} \norm{\v{x} - (\v{s}\iternext - \v{a}\itercur) }_2^2 \\
  & = (\v{\Phi}^\top\v{\Phi} + \rho\v{I})^{-1} ( \v{\Phi}^\top \v{y} + \rho (\v{s}\iternext - \v{a}\itercur) ).
\end{aligned}
\end{equation*}
\begin{equation*}
\begin{aligned}
  \v{z} \iternext
  & \leftarrow \argmin{\v{z}}
    \Vbeck_{\mu}(\v{z},\v{s}_0)
  + \frac{1}{2\kappa/\rho} \norm{\v{z} - (\v{s}\iternext - \v{b}\itercur) }_2^2 \\
  & = \proximal^\mu_{\kappa/\rho\Vbeck_{\v{s}_0}} (\v{s}\iternext - \v{b}\itercur) ,
\end{aligned}
\end{equation*}
where this proximal operator refers to the special case \eqref{eq:Prox_UOT_specialcase} which is solved with a slight modification to Algorithm~\ref{algo:UOTProx} as previously discussed:
\begin{equation*}
\begin{aligned}
  \v{a} \iternext &\leftarrow \v{a}\itercur + \v{x}\iternext - \v{s}\iternext, \\
  \v{b} \iternext &\leftarrow \v{b}\itercur + \v{z}\iternext - \v{s}\iternext. \\
\end{aligned}
\end{equation*}

The full algorithm is presented in Algorithm~\ref{algo:UOTDF}.
Line 6 requires an application of Algorithm~\ref{algo:UOTProx} whose output is a tuple containing several variables (i.e., $\v{z}$ as well as auxiliary/dual variables). In practice, these variables are initialized to $0$ and cached for subsequent warm-starts.
With the exception of lines 5 and 6, all lines within the while loop have linear run-time complexity.
The complexity of line 6 can be reduced to linear by exploiting warm-starts and early termination.
Therefore, the dominant cost is the matrix inversion step of line 5, which ranges between $\bigO{N}$ (when $\v{\Phi}$ is diagonal) and $\bigO{N^2}$ (when $\v{\Phi}$ is dense).

\begin{algorithm}
\caption{UOT-DF ADMM Algorithm.}
\label{algo:UOTDF}
\begin{algorithmic}[1]
\Require $\v{y}$, $\v{\Phi}$, $\v{s}_0$, $\kappa>0$, $\mu>0$, $\rho>0$
\Ensure $\v{s}\iternext$
\State $k = 1$
\State $\v{s}\itercur \leftarrow \v{x}\itercur \leftarrow \v{z}\itercur \leftarrow \v{a}\itercur \leftarrow \v{b}\itercur \leftarrow \v{0}$.
\While {\text{not converged}}
    \State $\v{s} \iternext \leftarrow \proximal_{\mathcal{R}+}^{\rho} (\v{x}\itercur +\v{a}\itercur)$
    \State $\v{x} \iternext \leftarrow (\v{\Phi}^\top\v{\Phi} + \rho\v{I})^{-1} ( \v{\Phi}^\top \v{y} + \rho (\v{s}\iternext - \v{a}\itercur) )$
    \State $\v{z} \iternext \leftarrow \proximal^\mu_{\kappa/\rho\Vbeck_{\v{s}_0}} (\v{s}\iternext - \v{b}\itercur)$
    \State $\v{a} \iternext \leftarrow \v{a}\itercur + \v{x}\iternext - \v{s}\iternext$
    \State $\v{b} \iternext \leftarrow \v{b}\itercur + \v{z}\iternext - \v{s}\iternext$
    \State $k \leftarrow k + 1$
\EndWhile
\end{algorithmic}
\end{algorithm}

We present results that showcase the scalability of our unbalanced Beckmann OT proximal algorithm in the applied context of Algorithm~\ref{algo:UOTDF}.
In Figure~\ref{fig:011_runtime_vs_problemsize}, we measure the time (wall time) required to solve \eqref{eq:UOT-DF} as a denoising problem (i.e., $\v{\Phi} = \v{I}$ and $\mathcal{R}=0$).
% This problem is selected to ensure that the dominant cost per ADMM iteration comes from the proximal algorithm.
We generate synthetic problems across problem sizes $N = \{8^2,16^2,32^2,64^2,128^2,256^2,512^2\}$, where $N$ refers to the number of pixels.
To achieve algorithmic efficiency of the proximal algorithm in line 6, we employ warm-starts and terminate after a single iteration (refer to Section~\ref{sssec:prox_efficiency} for justification).
The stopping criteria of our ADMM algorithm is:
\begin{equation*}
    \norm{\begin{bmatrix}\v{x}\iternext-\v{s}\iternext\\\v{z}\iternext-\v{s}\iternext\end{bmatrix}}_2 < \varepsilon , \quad
    \rho \norm{\begin{bmatrix}\v{x}\iternext-\v{x}\itercur\\\v{z}\iternext-\v{z}\itercur\end{bmatrix}}_2 < \varepsilon ,
\end{equation*}
where the primal and dual residuals are respectively less than $\varepsilon = 10^{-3}$ (see \cite[\S 3.3.1]{boyd2011distributed}).
We report the number of iterations required for convergence and the per-iteration wall time, which validates the main computation advantage of our UOT proximal algorithm: its linear run-time complexity.
Median values are reported and the experiments were performed on an Intel 6-core i7 2.60 GHz CPU with 16 GB of RAM.
% The medians of 10 random experiments are reported along with 25-th to 75-th percentile bars (although they are relatively small and not visible).

\begin{figure}[htb]
  % fig01 -- Figure011.pdf
  \centering
  \centerline{\includegraphics[trim={0cm 0.4cm 0cm 0cm},clip,width=0.49\textwidth]
  {./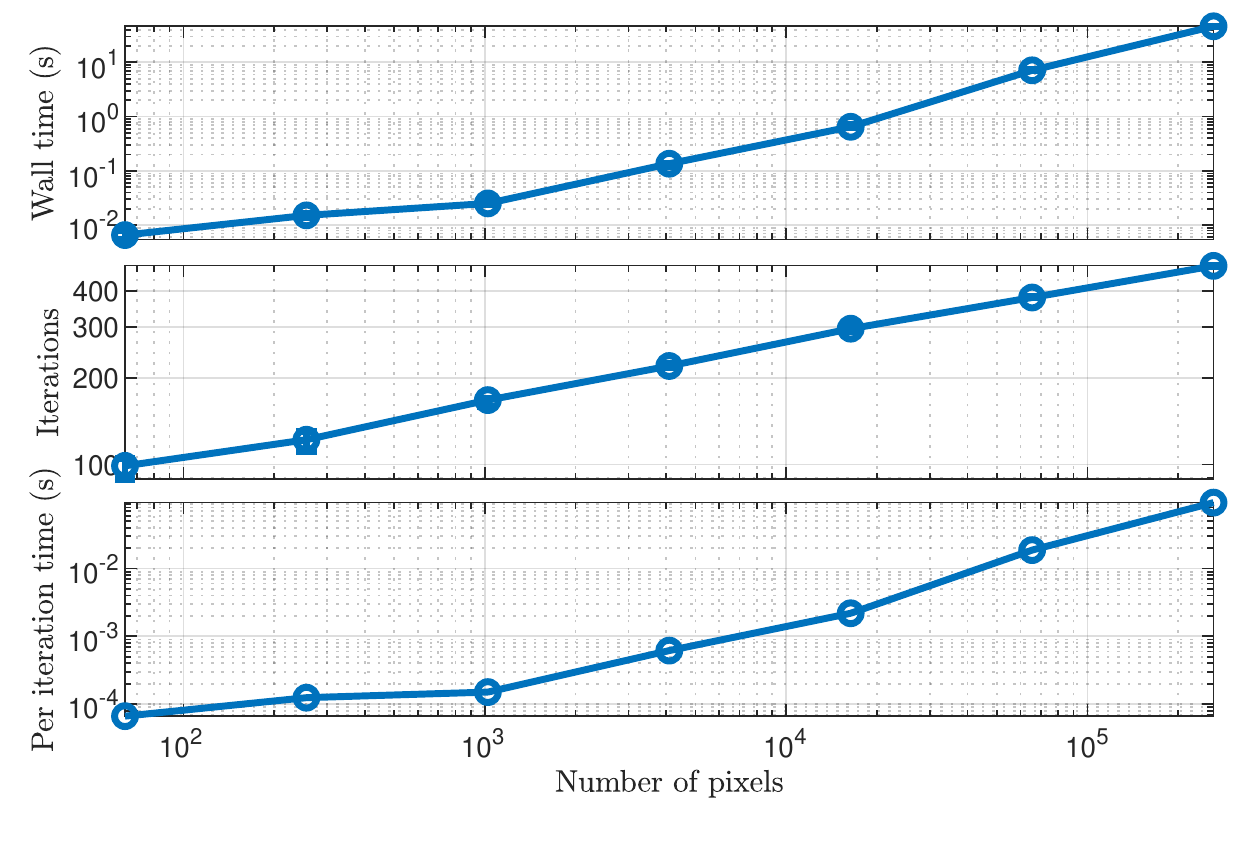}}
\caption{
  Runtime plots of Algorithm~\ref{algo:UOTDF} on different problem sizes.
  The per-iteration time reflects the linear run-time complexity of our algorithm.
}
\label{fig:011_runtime_vs_problemsize}
\end{figure}

%%%%%%%%%%%%%%%%%%%%%%%%%%%%%%%%%%%%%%%%%%%%%%%%%%%%%%%%%%%%%%%%%%%%%%%%%%%%%%%%

\subsubsection{Balanced-OT versus unbalanced-OT regularization under mass-changing regimes}
\label{sssec:results_comparison_OT_schemes}

To motivate this problem, we consider a radar tracking application where targets are represented as single (pixel) activations that move about independently over a fixed grid, and where amplitude-mismatch occurs between frames in time.
OT-regularization depends on principles of mass-conservation, thus, inference is negatively affected when such assumptions are violated.
We conduct experiments where the total mass changes between video frames to better understand the behaviors of balanced- and unbalanced-OT.
We simplify our study by considering the problem of single-frame reconstruction with an OT-regularizer
\begin{equation}
  \min_{\v{s}}
  \frac{1}{2} \norm{\v{y} - \v{\Phi} \v{s}}_2^2 + \kappa \mathcal{T} ( \v{s} , \v{s}_0 ) ,
\end{equation}
where $\v{y}\in\reals^M$ is a noisy observation, $\v{\Phi}^{M \times N}$ is a random Gaussian matrix ($M/N = 0.35$), $\v{s}_0$ is a dynamical-prior (with an identity dynamical function, i.e., $f_t(\v{s}_0) = \v{s}_0$), and the OT-regularizer $\mathcal{T}$ here takes either OT strategies:
(i)   balanced-OT ($\mathcal{T} = \Wbeck$) according to \eqref{eq:OT} which assumes that the total mass is equal to that of the prior-frame, and
(ii) unbalanced-OT ($\mathcal{T} = \Vbeck_{\mu}$) according to \eqref{eq:Unbal_OT} which assumes total amount of transported mass is regulated by an $\ell_p^p$-norm penalty on growth/decay of mass, with $p=1$ in this experiment.

We set up an illustrative denoising simulation, where sparse $10 \times 10$-pixel images have a total mass that changes between frames.
We allow mass to change under two regimes: mass growth and mass decay.
The goal is to recover the second frame $\v{s}$ from a compressed noisy observation $\v{y} = \v{\Phi} \v{s} + \v{\eta}$, where $\v{\eta} \sim \mathcal{N} (0,\sigma^2\v{I})$.
To simplify our empirical analysis, we assign $\v{s}_0$ to be the (uncorrupted) first frame.
The rate of mass change is defined as $|\ones^\top\v{s} - \ones^\top\v{s}_0| / \ones^\top \v{s}$.
Under the growth regime the intensity of active pixels increase, while they decrease under the decay regime.
Between frames, spatial support movement of active pixels is randomly assigned according to a radial Gaussian probability whose mean is one pixel away from its original location.
% All algorithms in this experiment were implemented with the CVX optimization software \cite{Grant2014CVX, Grant2008GraphImplementations} and solved using interior-point methods to ensure that they were consistently and reliably solved.
Algorithm~\ref{algo:UOTDF} was applied to solve both methods\footnote{Balanced-OT requires the \textit{balanced} version of Algorithm~\ref{algo:UOTProx}, which is a trivial modification: variable $\v{r}$ is dropped, and $\mathcal{K}(\v{M},\v{x}) := \divergence(\v{M}) - \v{s} + \v{x}$.} with line 4 assuming $\proximal_{\mathcal{R}+}^\rho(\v{q}) = \Pi_+(\v{q})$.
An exhaustive parameter search is performed at each experimental setting to obtain the optimal performance of each method.

%\todo{make more clear why you are using cvx and not the algorithms presented in this paper}

\begin{figure}[htb]
  % fig02 -- Figure007_v2_qualitative.pdf
  \centering
  \centerline{\includegraphics[trim={0cm 0cm 0cm 0cm},clip,width=0.45\textwidth]
  {./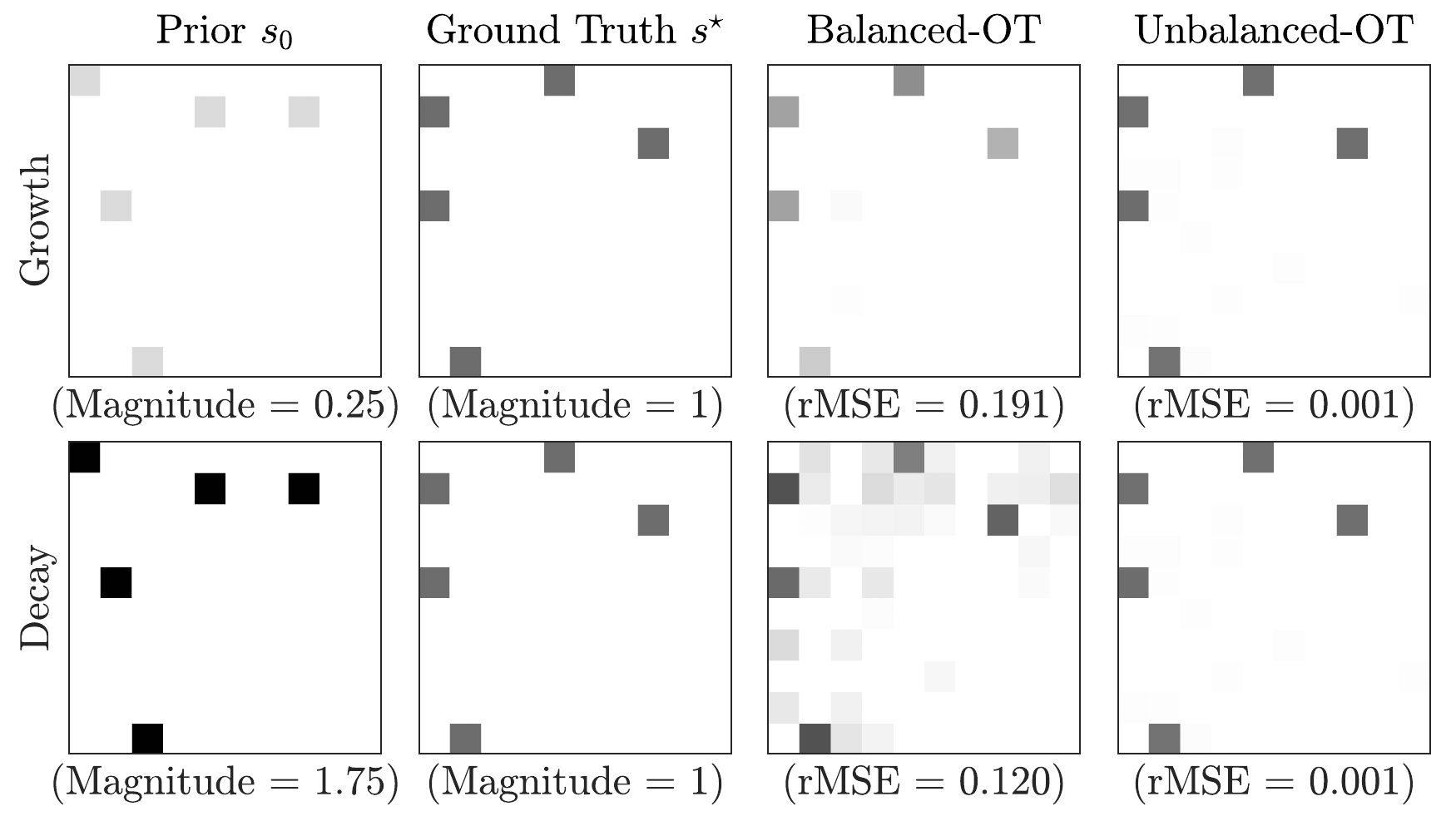}}
\caption{
  Qualitative reconstruction performance of OT-regularization schemes under mass changing regimes.
  We observe the optimal reconstruction solutions from denoising with various OT-priors under regimes of mass-growth (top row) and mass-decay (bottom row).
  Across both regimes, unbalanced-OT offers solutions with good support estimation and low relative mean squared error (rMSE).
}
\label{fig:007_qualitative}
\end{figure}

\begin{figure}[htb]
  % fig03a.pdf -- Figure007_varyingnoise.pdf
  % fig03b.pdf -- Figure007_varyingrate.pdf
  \centering
  \centerline{\includegraphics[trim={0cm 0cm 0cm 0cm},clip,width=0.49\textwidth]
  {./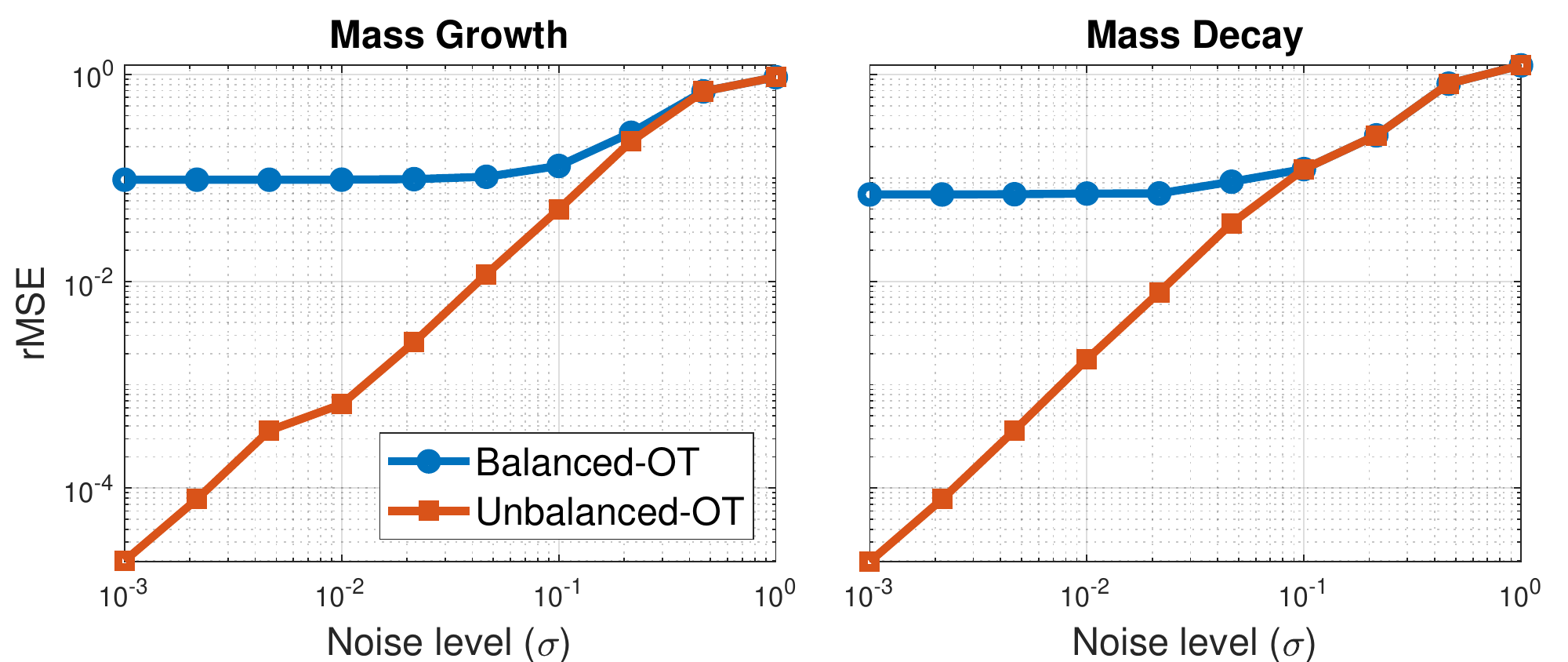}}
  \vspace{0.25cm}
  \centerline{\includegraphics[trim={0cm 0cm 0cm 0.65cm},clip,width=0.49\textwidth]
  {./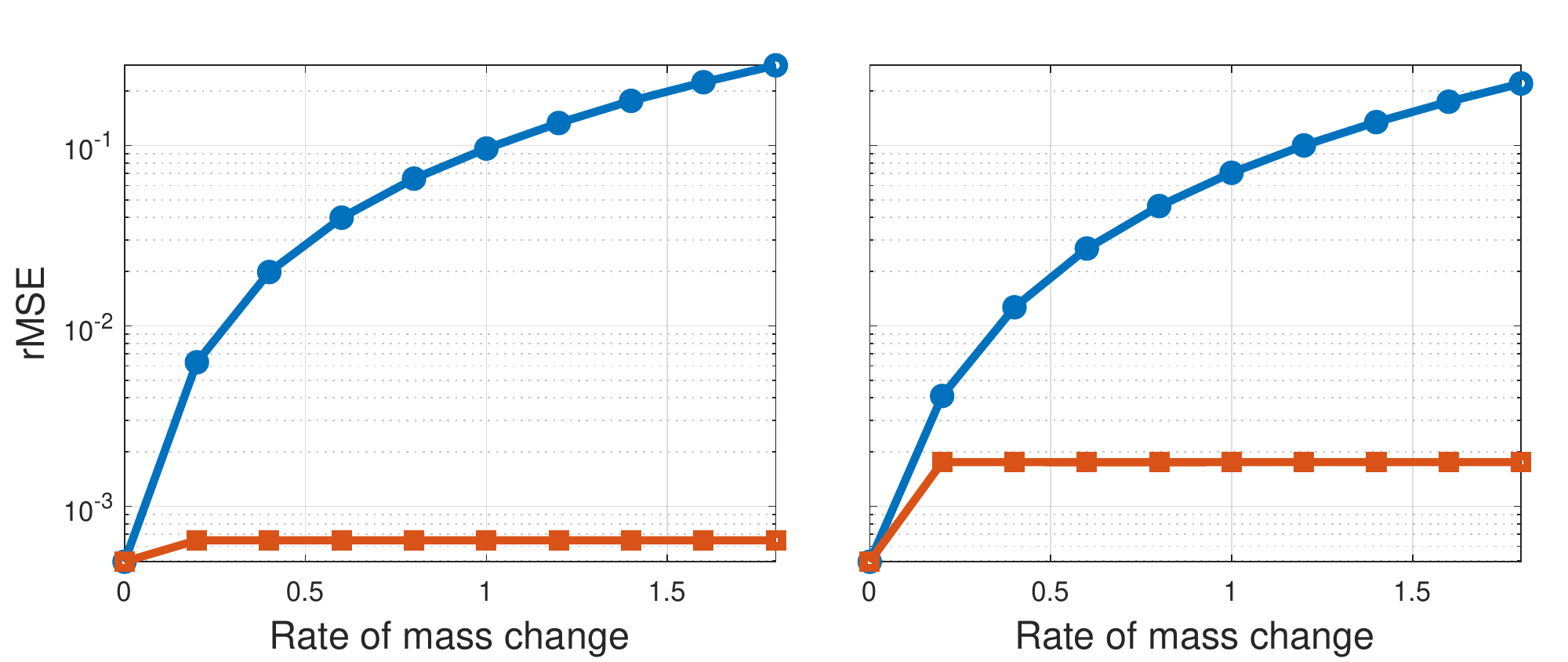}}
\caption{
  Quantitative reconstruction performance of OT-regularization schemes at varying noise levels (first row) and at varying growth/decay rates (second row).
  We subdivide the experiments into categories of mass growth (first column) and mass decay (second column).
  Overall, unbalanced-OT has superior performance across various noise-levels and across various rates of mass change: unbalanced-OT has a reconstruction error that empirically lower bounds that of balanced-OT.
  Especially in mass changing regimes, unbalanced-OT exhibits significant robustness compared to balanced-OT.
}
\label{fig:007_quantitative}
\end{figure}

In Figure \ref{fig:007_qualitative}, we inspect their qualitative behaviors.
Under a growth regime, balanced-OT  transports insufficient mass from $\v{s}_0$, resulting in a gross under-estimation of its amplitude.
Conversely, under the decay regime, too much mass is transported so excess mass has to ``overflow'' into neighboring support.
In the growth regime, support estimation is interestingly perfect due to the limited transport budget, while support estimation is poor in the decay regime due to excess transport budget.
For unbalanced-OT, we observe that it has similar solutions (though not identical) under both regimes, with good support estimation and low reconstruction errors.
Reconstruction error is measured using \emph{relative mean squared error} (rMSE) computed as $\norm{\hat{\v{s}}-\v{s}^\star}_2^2 / \norm{\v{s}^\star}_2^2$ for a given estimation $\hat{\v{s}}$ and ground truth $\v{s}^\star$; rMSE lies in the range $[0,\infty]$ with 0 representing perfect reconstruction.
Unbalanced-OT has a mechanism that adjusts its transported mass (via parameter $\mu$) so it does not suffer from mass overflow or underflow issues of balanced-OT.
Furthermore, unbalanced-OT enjoys favorable support estimation because it allows mass growth/decay at the individual pixel level.
In Figure \ref{fig:007_quantitative}, we quantitatively compare the reconstruction error of the two OT-regularizers.
In the first set of plots (top row) we vary noise levels (under a fixed mass-change rate of $0.5$), and in the second set of plots (bottom row) we vary mass-change rates (under a fixed noise level of $\sigma=0.1$).
Across all noise levels and mass-change rates, unbalanced-OT consistently achieves the best performance.
% \npb{Why does error drop off for UOT as rate of growth increases?} John: I don't know!

%%%%%%%%%%%%%%%%%%%%%%%%%%%%%%%%%%%%%%%%%%%%%%%%%%%%%%%%%%%%%%%%%%%%%%%%%%%%%%%%

\subsubsection{Comparisons to sparse tracking algorithms in an occlusive setting}
\label{sssec:results_literature_comparison}

To highlight advantages of our method, we compare BPDN+UOT-DF \eqref{eq:BPDN+UOT-DF} against two methods in the sparse tracking literature.
Specifically, we compare
(i) BPDN with an $\ell_1$-dynamic regularizer (L1) \cite{vaswani2010ls, sejdinovic2010bayesian, farahmand2011doubly, Charles2016DynamicFiltering} that promotes temporal continuity by penalizing the $\ell_1$ difference between the current and previous frames, and
(ii) reweighted $\ell_1$ dynamic-filtering (RWL1) \cite{Charles2016DynamicFiltering} that robustly promotes temporal continuity by propagating higher order statistics in Markovian fashion.

This simulation differs from the previous one of Figure~\ref{fig:007_qualitative} in two ways.
First, we consider the more realistic scenario where a longer video are estimated in an online fashion. Here, the reconstructed estimate of the previous frame serves as the signal prior for the next frame; only the prior to the first frame is seeded with the ground truth.
Second, we introduce an occlusive region in the middle of each frame to study how various algorithms cope with extremely sharp temporal discontinuities when active pixels `appear/disappear' from behind the occlusion. Individual pixel intensities are otherwise kept constant across frames.

In Figure~\ref{fig:010}, we illustrate using a representative simulation that UOT produces the best reconstructions both quantitatively (lowest rMSE) and qualitatively.
Compared to the other methods, UOT's flat and smooth quantitative error profile reflects its ability to effectively smooth temporal information.
This property provides robustness against isolated corrupted frames that are reflected as error spikes in the competing methods.
Although BOT also initially demonstrates a similar smoothing ability, its reconstruction is no longer accurate when the total observed mass contracts (after frame 6).

\begin{figure}[htb]
  % fig04a.pdf -- Figure010_qualitative.pdf
  % fig04b.pdf -- Figure010_rMSE.pdf
  \centering
  \centerline{\includegraphics[trim={0cm 0cm 0cm 0cm},clip,width=0.49\textwidth]
  {./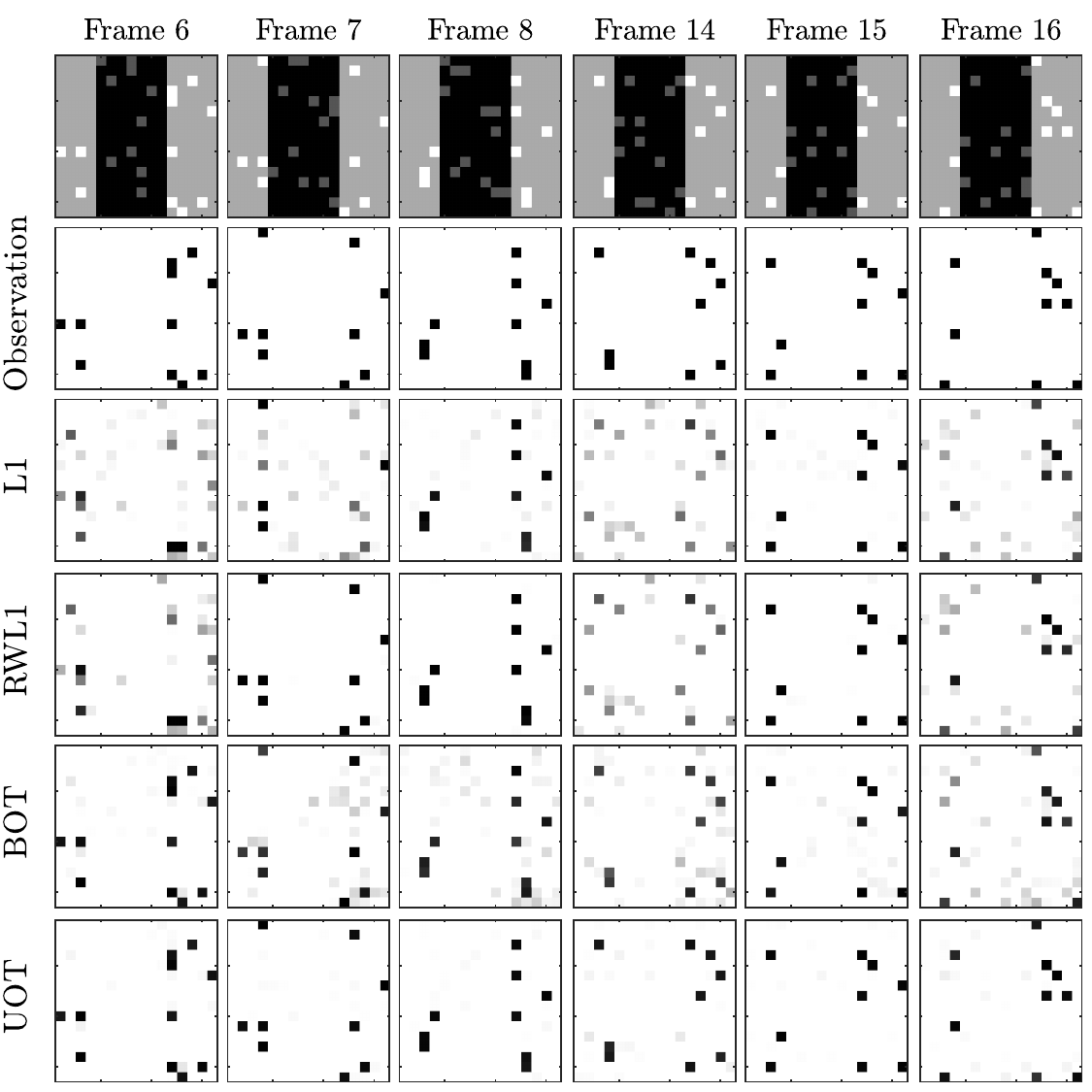}}
  \vspace{0.25cm}
  \centerline{\includegraphics[trim={0.6cm 0.1cm 0.9cm 0.2cm},clip,width=0.49\textwidth]
  {./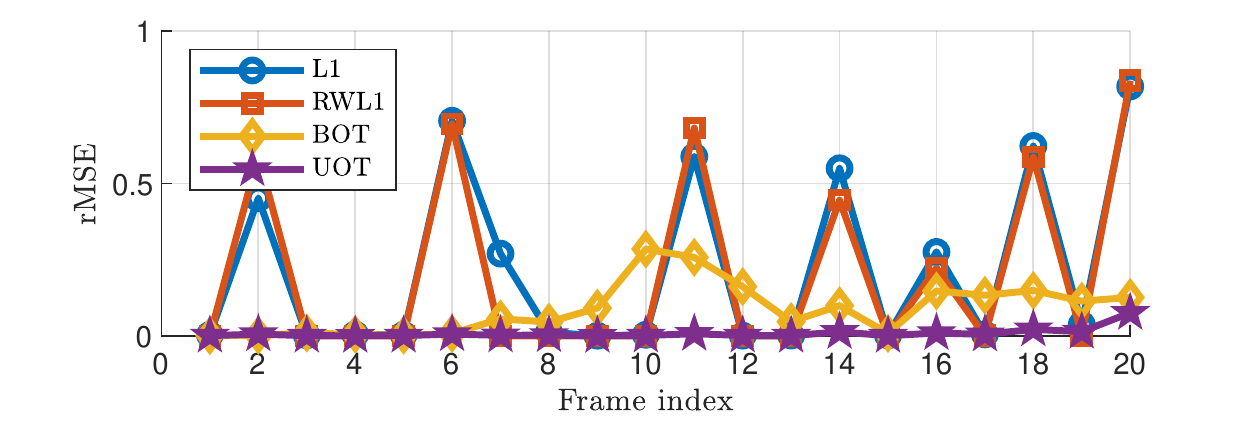}}
\caption{
  Qualitative and quantitative reconstruction performance of BPDN augmented with various dynamical priors in an occlusive setting.
  The top row displays frames of random target motion in a scene that contains a vertical (darkened) occlusive region (only light gray regions are visible to the observer).
  The second row shows the (uncompressed) observations whose occlusive regions have been subtracted (before noise is added).
  The following rows show a sample of reconstructed frames from four sparse tracking methods: BPDN with unbalanced-OT (UOT), BPDN with balanced-OT (BOT), BPDN with $\ell_1$ (L1), and reweighted $\ell_1$ dynamical filtering (RWL1).
  The quantitative plot shows the per-frame reconstruction using the relative mean squared error (rMSE) metric.
  Qualitatively and quantitatively, UOT is the superior method.
}
\label{fig:010}
\end{figure}

%%%%%%%%%%%%%%%%%%%%%%%%%%%%%%%%%%%%%%%%%%%%%%%%%%%%%%%%%%%%%%%%%%%%%%%%%%%%%%%%
%%%%%%%%%%%%%%%%%%%%%%%%%%%%%%%%%%%%%%%%%%%%%%%%%%%%%%%%%%%%%%%%%%%%%%%%%%%%%%%%

\subsection{Robust PCA with UOT-regularization}
\label{ssec:RPCA_batch_setting}

For our second application, we consider an ill-posed joint separation and inverse problem where the goal is to separate a superposition of signal $\v{S}$ and clutter (or background/interference) $\v{L}$ embedded in noisy measurements $\v{Y}$.
A similar dynamical propagation model (as previously described) is assumed between consecutive frames of $\v{S} = [\v{s}_1, \ldots,\v{s}_T] \in \reals^{N \times T}$, while clutter is denoted with $\v{L} = [\v{l}_1,\dots,\v{l}_T] \in \reals^{N \times T}$ where $\rank(\v{L}) = R \leq \min(N,T)$.
Measurements $\v{Y} = [\v{y}_1,\dots,\v{y}_T] \in \reals^{M \times T}$ are captured by the model
\begin{equation*}
  \v{y}_t = \v{\Phi}_t ( \v{s}_t + \v{l}_t ) + \v{\eta}_t ,
  \qquad
  t = 1, \dots, T ,
\end{equation*}
where $\v{\Phi}_t \in \reals^{M \times N}$ is the measurement matrix, and $\v{\eta}_t \in \reals^{M}$ denotes iid Gaussian measurement noise.
Unlike the online nature of \eqref{eq:UOT-DF}, this application  requires a batch of time-windowed data to capture sufficient information so that clutter can be differentiated from the sparse signals.
To solve this problem, we employ a framework called \textit{robust principal components analysis (RPCA)} to separate $\v{S}$ from $\v{L}$, and augment it using our unbalanced-OT model to exploit dynamical continuity between the adjacent frames of $\v{S}$.

% Before proceeding, we describe some RPCA background.
RPCA separates data $\v{X}$ comprising of a superposition of sparse outliers $\v{S}$ and a low-rank component $\v{L}$ via
\begin{equation} \label{eq:RPCA_pure}
  \min_{\v{S},\v{L}} \enspace
  \|\v{S}\|_1 + \gamma \|\v{L}\|_\ast
  \enspace \text{s.t.} \enspace
  \v{X} = \v{S} + \v{L} ,
\end{equation}
where $\|\v{S}\|_1 = \sum_{ij} |S_{ij}|$ and $\|\v{L}\|_\ast$ refers to the nuclear norm, (i.e., sum of magnitudes of $\v{L}$'s singular values) with some parameter $\gamma > 0$.
It was shown in \cite{candes2011robust, wright2013compressive} that under incoherence and randomness conditions on $\v{S}$ and $\v{L}$, solving \eqref{eq:RPCA_pure} with parameter $\gamma = \sqrt{\max(N,T)}$ recovers $\v{S}$ and $\v{L}$ with high probability, provided $\v{S}$ is sufficiently sparse and $\v{L}$ is sufficiently low-rank.
A relevant application to this work is video surveillance \cite{candes2011robust}, where video frames are packed into the columns of $\v{X}$.
In this scenario, $\v{L}$ corresponds to the stationary background while $\v{S}$ captures moving objects in the foreground.
%%%%
A rich literature has developed around the idea of sparse and low-rank decompositions \cite{Bouwmans2016HandbookRobustLowrank}. One branch of work focuses on enforcing additional structure on the sparse component to encourage solutions that vary continuously over time, for example by employing optical flow based methods \cite{Ye2015ForegroundBackgroundSeparation} or Markov Random Fields \cite{Zhou2013MovingObjectDetection}. However, these methods assume full access to the original video frames and would require non-trivial adaptations to allow recovery in the context of an inverse problem.

In this work, we extend the stable formulation of RPCA \cite{Zhou2010StablePrincipalComponent} with linear measurements \cite{Waters2011SpaRCSRecoveringLowrank}
\begin{equation} \label{eqn:stable_RPCA}
\begin{aligned}
  \min_{\v{S} , \v{L} \geq 0} \enspace
  &\frac{1}{2} \sum_{t=1}^T \big( \| \v{y}_t - \v{\Phi}_t (\v{s}_t + \v{l}_t) \|_2^2 \big) \\
  &+ \lambda \| \v{S} \|_1
  + \gamma \| \v{L} \|_\ast,
\end{aligned}
\end{equation}
and show how our unbalanced-OT model can easily be incorporated to use optimal transport as a continuity regularizer on the sparse component.
Specifically, we propose robust PCA with UOT-regularized dynamic-filtering (RPCA+UOT-DF):
\begin{equation} \label{eq:RPCA_UOTDF}
\begin{aligned}
  \min_{\v{S} , \v{L} \geq 0} \enspace
  &\frac{1}{2} \sum_{t=1}^T \big( \| \v{y}_t - \v{\Phi}_t (\v{s}_t + \v{l}_t) \|_2^2 \big) \\
  &+ \lambda \| \v{S} \|_1
  + \gamma \| \v{L} \|_\ast
  + \kappa \sum_{t=1}^{T-1} \Vbeck_\mu ( \v{s}_{t} , \v{s}_{t+1} ) .
\end{aligned}
\end{equation}
The first term is a data fidelity term, the second term a sparsity prior, the third term a low-rank prior, and the last term the unbalanced-OT regularizer that promotes temporal coherence across the sparse frames of the signal.
%In contrast to the classical RPCA formulation \eqref{eq:RPCA_pure}, we have included a measurement data fidelity term (first term) to generalize this to a wider class of sensor systems (e.g., optical sensors).
%\npb{<- this sentence feels a bit out of place given (20) which already poses the stable+compressive form.} % Agreed!
Applying the theory in \cite{wright2013compressive} allows us to reduce one parameter due to the relation $\gamma/\lambda = \sqrt{\max(N,T)}$.

\subsubsection{ADMM solver for RPCA+UOT-DF}
\label{sssec:RPCA_UOTDF_ADMM}

%\todo{is this it's own section? should it be with previous section? i'm not really understanding the organization of your thnking in section 4 broadly}

%\todo{the sentence below is very unclear. how does what's happening here relate to the algorithm you presented earlier? is this its own thing? a specialization of what you did earlier? if it's a separate thing, why is this not established as a contribution earlier on.}

RPCA+UOT-DF may be interpreted as a massive variational OT problem where $T-1$ OT problems of size $N$ are simultaneously solved: this has a traditional per-iteration computational complexity of $\bigO{N^3 \log N}$.
To make this problem tractable, we propose a proximal first order method based on ADMM \cite{boyd2011distributed} to solve \eqref{eq:RPCA_UOTDF}, and in so doing, highlight the superior $\bigO{N}$ complexity of our unbalanced-OT Beckmann proximal algorithm.
To begin, we perform variable splitting on \eqref{eq:RPCA_UOTDF} to formulate an equivalent problem
\begin{equation*} \label{eq:RPCA_UOTDF_split}
\begin{aligned}
  \min \enspace
  &\frac{1}{2} \sum_{t=1}^T \big( \| \v{y}_t - \v{\Phi}_t \v{x}_t \|_2^2 \big)
  + \lambda \| \v{S} \|_1
  + \gamma \| \v{T} \|_\ast \\
  &+ \ind{+}( \v{S} )
  + \ind{+}( \v{L} )
  + \kappa \sum_{t=1}^{T-1} \Vbeck_\mu ( \v{z}_{t} , \v{w}_{t+1} ) \\
  \text{s.t.} \quad
  &\v{X} = \v{S} + \v{L} , \enspace
  \v{T} = \v{L} , \\
  &\v{z}_t = \v{s}_t , \enspace
  \v{w}_{t+1} = \v{s}_{t+1} , \enspace t = 1,\dots,T-1,
\end{aligned}
\end{equation*}
where $\v{X},\v{T},\v{Z},\v{W}$ are auxiliary variables.
Introducing multiplier variables $\v{A},\v{B},\v{C},\v{D}$, the \emph{augmented} Lagrangian is
\begin{equation*}
\begin{split}
  \mathcal{L}(\v{S},\v{L},\v{X},\v{T},\v{Z},\v{W},\v{A},\v{B},\v{C},\v{D}) =
  \frac{1}{2} \sum_{t=1}^T \big( \| \v{y}_t - \v{\Phi}_t \v{x}_t \|_2^2 \big) \\
  + \lambda \| \v{S} \|_1
  + \gamma \| \v{T} \|_\ast
  + \ind{+}( \v{S} )
  + \ind{+}( \v{L} )
  + \kappa \sum_{t=1}^{T-1} \Vbeck_\mu ( \v{z}_{t} , \v{w}_{t+1} ) \\
  + \frac{\rho}{2} \Big( \norm{\v{X} - \v{S}-\v{L} + \v{A}}_F^2 + \norm{\v{L} - \v{T} + \v{D}}_F^2 \\
  +                      \sum_{t=1}^{T-1} \norm{\v{z}_t - \v{s}_t + \v{b}_t}_2^2 + \norm{\v{w}_{t+1} - \v{s}_{t+1} + \v{c}_{t+1}}_2^2 \Big) ,
\end{split}
\end{equation*}
noting that matrices are expressed with upper-case bold letters while lower-case bold letters refer to their respective columns (with index denoted by subscript $t$).
ADMM generates a convergent sequence of updates by successively solving for each variable per iteration.
The updates are as follows.
\begin{equation*}
\begin{aligned}
  \v{s}_t \iternext
  & \leftarrow \argmin{\v{s} \geq 0}
    \lambda \| \v{s} \|_1
  + \frac{\rho\sigma_t}{2} \norm{\v{s} - \v{k}\iternext_t / \sigma_t }_2^2 \\
  & = \shrink^{\ell_1}_{\lambda/\rho\sigma_t} ( \v{k}\iternext_t / \sigma_t ) ,
\end{aligned}
\end{equation*}
where $\v{k}\iternext_t = (\v{x}\itercur_t - \v{l}\itercur_t + \v{a}\itercur_t) + \omega^z_t (\v{z}\itercur_t + \v{b}\itercur_t) + \omega^w_t (\v{w}\itercur_t + \v{c}\itercur_t)$ and $\sigma_t = 1 + \omega^z_t + \omega^w_t$, with $\omega^z_t = \omega^w_{t+1} = 1$ for $t = 1,\dots,T-1$ and $0$ otherwise.
\begin{equation*}
  \v{L} \iternext
    \leftarrow \argmin{\v{L} \geq 0}
    \norm{\v{L} - \v{K}\iternext }_F^2
  = \Pi_+ ( \v{K}\iternext ) ,
\end{equation*}
where $\v{K}\iternext = \frac{1}{2} (\v{T}\itercur-\v{D}\itercur + \v{X}\itercur-\v{S}\iternext+\v{A}\itercur)$.
\begin{equation*}
\begin{aligned}
  \v{x}\iternext_t
  & \leftarrow \argmin{\v{x}}
    \norm{\v{y}_t - \v{\Phi}_t \v{x}}_2^2
  + \rho \norm{\v{x} - \v{k}\iternext_t }_2^2 \\
  & = (\v{\Phi}^\top \v{\Phi} + \rho\v{I})^{-1} ( \v{\Phi}^\top_t\v{y}_t + \v{k}\iternext_t ) ,
\end{aligned}
\end{equation*}
where $\v{k}\iternext_t = \v{s}\iternext_t + \v{l}\iternext_t - \v{a}\itercur_t$.
\begin{equation*}
\begin{aligned}
  \v{T}\iternext
  & \leftarrow \argmin{\v{T}}
    \gamma \norm{\v{T}}_\ast
  + \frac{\rho}{2} \norm{\v{T} - (\v{L}\iternext+\v{D}\itercur) }_2^2 \\
  & = \shrink^{\ast}_{\gamma/\rho} ( \v{L}\iternext+\v{D}\itercur ) ,
\end{aligned}
\end{equation*}
where the singular value thresholding operator is given by $\shrink^{\ast}_{\sigma} (\v{X}) = \v{U} \shrink^{\ell_1}_{\sigma}(\v{\Sigma}) \v{V}^\top$, where $\v{X} = \v{U}\v{\Sigma}\v{V}^\top$ is any singular-value decomposition.
\begin{equation*}
\begin{aligned}
  \begin{bmatrix} \v{z}\iternext_t \\ \v{w}\iternext_{t+1} \end{bmatrix}
  & \leftarrow \argmin{\v{z},\v{w}}
  \Biggl(
    \kappa \Vbeck_\mu (\v{z},\v{w}) \\
  &\qquad\qquad\qquad + \frac{\rho}{2} \norm{ \begin{bmatrix} \v{z} \\ \v{w} \end{bmatrix} - \begin{bmatrix} \v{s}\iternext_t - \v{b}\itercur_t \\ \v{s}\iternext_{t+1} - \v{c}\itercur_{t+1} \end{bmatrix} }_2^2 \Biggr) \\
  & = \proximal_{\kappa/\rho\Vbeck}^{\mu} ( \v{s}\iternext_t - \v{b}\itercur_t , \v{s}\iternext_{t+1} - \v{c}\itercur_{t+1} ) .
\end{aligned}
\end{equation*}
\begin{equation*}
\begin{aligned}
  \v{A}\iternext &\leftarrow \v{A}\itercur + \v{X}\iternext - \v{S}\iternext + \v{L}\iternext \\
  \v{D}\iternext &\leftarrow \v{D}\itercur + \v{L}\iternext - \v{T}\iternext \\
  \v{b}\iternext_t &\leftarrow \v{b}\itercur_t + \v{z}\iternext_t - \v{s}\iternext_t, \enspace \forall t = 1,\dots,T-1 \\
  \v{c}\iternext_{t+1} &\leftarrow \v{c}\itercur_{t+1} + \v{w}\iternext_{t+1} - \v{s}\iternext_{t+1}, \enspace \forall t = 1,\dots,T-1 .
\end{aligned}
\end{equation*}

The full algorithm is presented in Algorithm \ref{algo:RPCA_UOTDF}.
Each update has linear run-time complexity with the exception of lines 9 and 10.
The dominant cost per iteration comes from evaluating the singular value thresholding operator in line 10, which can be mitigated by applying either a partial or approximate SVD \cite{candes2011robust}.
Line 9 solves $T$ inverse problems; its complexity depends on how $\v{\Phi}_t$'s are structured ($\bigO{TN}$ if diagonal, $\bigO{TN^2}$ if dense with prefactorization).

\begin{algorithm}
\caption{RPCA+UOT-DF ADMM Algorithm.}
\label{algo:RPCA_UOTDF}
\begin{algorithmic}[1]
\Require $\{\v{y}_t, \v{\Phi}_t\}_{t=1}^T$, $\lambda$, $\gamma$, $\kappa$, $\mu$, $\rho$
\Ensure $\v{S}\iternext$, $\v{L}\iternext$
\State $k = 1$
\State $\v{S}\itercur \leftarrow \v{L}\itercur \leftarrow \v{X}\itercur \leftarrow \v{T}\itercur \leftarrow \v{Z}\itercur \leftarrow \v{W}\itercur \leftarrow \v{A}\itercur \leftarrow \v{B}\itercur \leftarrow \v{C}\itercur \leftarrow \v{D}\itercur \leftarrow \v{0}$
\While {\text{not converged}}
    \State $\omega^z_t \leftarrow \omega^w_{t+1} = 1$ for $t = 1,\dots,T-1$ and $0$ otherwise
    \State $\v{k}\iternext_t \leftarrow (\v{x}\itercur_t - \v{l}\itercur_t + \v{a}\itercur_t) + \omega^z_t (\v{z}\itercur_t + \v{b}\itercur_t) + \omega^w_t (\v{w}\itercur_t + \v{c}\itercur_t)$
    \State $\sigma_t \leftarrow 1 + \omega^z_t + \omega^w_t$
    \State $\v{s}\iternext_t \leftarrow \shrink^{\ell_1}_{\lambda/\rho\sigma_t} ( \v{k}\iternext_t / \sigma_t ), \enspace \forall t=1,\dots,T$
    \State $\v{L}\iternext \leftarrow \frac{1}{2} \Pi_+ ( \v{T}\itercur-\v{D}\itercur + \v{X}\itercur-\v{S}\iternext+\v{A}\itercur )$
    \State $\v{x}\iternext_t \leftarrow (\v{\Phi}_t^\top \v{\Phi}_t + \rho\v{I})^{-1} ( \v{\Phi}^\top_t\v{y}_t + \v{s}\iternext_t + \v{l}\iternext_t - \v{a}\itercur_t )$, $\forall t = 1,\dots,T$
    \State $\v{T}\iternext \leftarrow \shrink^{\ast}_{\gamma/\rho} ( \v{L}\iternext+\v{D}\itercur )$
    \State $( \v{z}\iternext_t , \v{w}\iternext_{t+1} ) \leftarrow \proximal_{\kappa/\rho\Vbeck}^{\mu} ( \v{s}\iternext_t - \v{b}\itercur_t , \v{s}\iternext_{t+1} - \v{c}\itercur_{t+1} )$
    \State $\v{A}\iternext \leftarrow \v{A}\itercur + \v{X}\iternext - \v{S}\iternext + \v{L}\iternext$
    \State $\v{D}\iternext \leftarrow \v{D}\itercur + \v{L}\iternext - \v{T}\iternext$
    \State $\v{b}\iternext_t \leftarrow \v{b}\itercur_t + \v{z}\iternext_t - \v{s}\iternext_t, \enspace \forall t = 1,\dots,T-1$
    \State $\v{c}\iternext_{t+1} \leftarrow \v{c}\itercur_{t+1} + \v{w}\iternext_{t+1} - \v{s}\iternext_{t+1}, \enspace \forall t = 1,\dots,T-1$
    \State $k \leftarrow k + 1$
\EndWhile
\end{algorithmic}
\end{algorithm}

\subsubsection{Efficiency of proximal Beckmann in ADMM implementation of RPCA+UOT-DF}
\label{sssec:prox_efficiency}

Proximal algorithms have maximal utility when they can be computed efficiently (i.e., closed form solutions at best, and iterative at worst).
%\npb{Is the parenthetical statement necessary?}
Although the proposed Beckmann proximal algorithm of Section \ref{ssec:UOTprox} is iterative, we submit that it can still be extremely efficient by applying standard strategies such as:
\begin{itemize}
    \item \textit{warm starts} -- instead of restarting the Beckmann proximal algorithm at each ADMM iteration, we warm-start it using its state from the previous ADMM iteration, and
    \item \textit{inexact updates} (early termination) -- rather than solving the proximal algorithm to high precision, we only partially solve it by terminating it early after a fixed (predetermined) number of iterations.
\end{itemize}
To empirically demonstrate the benefits of the above strategies, we apply them to our proposed Beckmann proximal algorithm within the ADMM framework of the RPCA+UOT-DF solver described in Section \ref{sssec:RPCA_UOTDF_ADMM}.
In Figure \ref{fig:008} we inspect the computational complexity associated with ADMM as a function of solution exactness (i.e., accuracy) of the Beckmann proximal algorithm.
%\npb{change to solution exactness to solution accuracy?}
Exactness here is implied by the number of fixed iterations we let it run (i.e., more iterations imply a more exact solution).
To measure the computational complexity of ADMM, we count the number of ADMM iterations required for it satisfy the following stopping criterion: the primal and dual residuals (defined \cite[\S3.3]{boyd2011distributed}) must both reach a value less than $10^{-4}$.
%\npb{change to ``must both reach a value less than $10^{-4}$?}
\textit{Additional \% of ADMM iterations} (y-axis) is defined as $|b(k) - b(30)|/b(30)$, where $b(k)$ refers to the recorded number of ADMM iterations at convergence, while $k$ refers to the number of fixed Beckmann proximal iterations (x-axis).
We use $b(30)$ as our baseline since convergence of the proximal algorithm is observed at this value of $k$.
%We express the number of ADMM iterations as a percentage difference from the minimum number of ADMM iterations found (i.e., when the number of proximal iterations is 30).
Unsurprisingly, our results demonstrate that that ADMM's computational complexity decays as a function of proximal solution exactness.
However, even at the upperbound (a single proximal iteration), the overall ADMM complexity is inflated by merely $1 \%$. This suggests that our proximal algorithm can be very efficient when used with these strategies.
%\npb{``what we find remarkable is that'' to ``Remarkably however,''?}
%More generally however, each solver iteration could potentially be much more expensive than each proximal iteration; as such, we recommend tuning the balance between solver iterations and proximal iterations to achieve optimal computational efficiency.

\begin{figure}[htb]
  % fig05.pdf - Figure008.pdf
  \centering
  \centerline{\includegraphics[trim={0cm 0cm 0cm 0cm},clip,width=0.49\textwidth]
  {./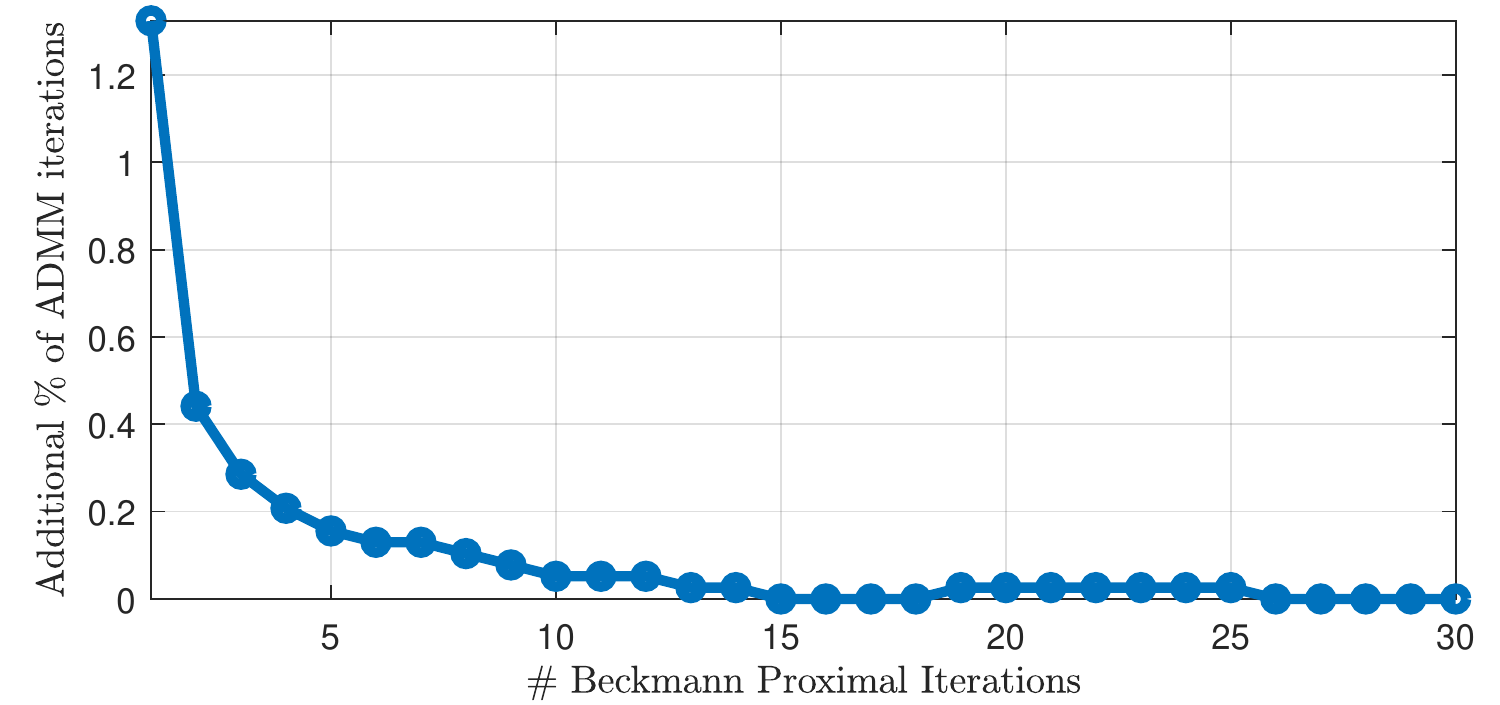}}
\caption{
  Additional ADMM computational cost (expressed as a percentage over the minimum possible number of ADMM iterations) as a function of proximal iterations.
  When warm-starts and early-termination strategies are applied, we see that even in worst case (just 1 proximal iteration), overall costs increase by merely $1 \%$.
}
\label{fig:008}
\end{figure}

\subsubsection{RPCA+UOT-DF performance on synthetic data}
To motivate this section's synthetic data example, consider now a radar scenario where targets are to be  recovered from underdetermined linear measurements and there exists both additive noise as well as interference (e.g., from other transmitters) that must be separated out.
We characterize RPCA+UOT-DF ($p=1$) on synthetic data generated under this scenario using the following procedure.
The matrix $\v{S} = [\v{s}_1, \dots, \v{s}_T]$ is a sparse matrix whose columns $\v{s}_t \in \reals^N_+$ are vectorized $n \times n$ images ($N = n^2$).
We begin by generating $K$ active pixels (targets) in $\v{s}_1$ then, for each consecutive frame, randomly move them in similar fashion as described in Section \ref{sssec:results_comparison_OT_schemes}.
The top row of Figure \ref{fig:000} illustrates how the ground truth is generated on a simple dataset with $10 \times 10 \text{ pixels} \times 6 \text{ frames}$.
$\v{L} = [\v{l}_1, \dots, \v{l}_T]$ is generated by multiplying two low-rank matrices $\v{U}\v{V}^\top/4R$ where $\v{U}\in\reals^{N \times R}_+$ and $\v{V}\in\reals^{T \times R}_+$ are matrices of rank $R \leq \min(N,T)$, and whose entries are distributed as $U_{ij},V_{ij} \sim \mathrm{Uniform}(0,1)$.
%\npb{Uniform with mean of 1 doesn't fully specify the distribution.}
$\v{\Phi}_t \in \reals^{M \times T}$ have entries that are randomly generated from an iid normal distribution with a variance of $1/M$.
The observation matrix $\v{Y} = [\v{y}_1,\dots,\v{y}_T]$ consists of columns produced by $\v{y}_t = \v{\Phi}_t ( \v{s}_t + \v{l}_t ) + \v{\eta}_t$, with noise $\v{\eta}_t \sim \mathcal{N}(0,\sigma^2 \v{I})$.

In our experiments, we compared against three other algorithms: (i) robust PCA (RPCA) \cite{candes2011robust} to serve as a benchmark, (ii) RPCA with a \emph{balanced-OT} regularizer (RPCA+BOT-DF), and (iii) RPCA with an $\ell_1$ dynamical filter (RPCA+L1-DF) as a cheap dynamical filtering alternative to OT methods.
Since there currently exists no sparsity-based dynamically-aided RPCA competitors in the literature, these algorithms act as hypothetical contenders.
For this solver, we replace the unbalanced-OT prior in \eqref{eq:RPCA_UOTDF} with $\sum_{t=1}^{T-1} \| b(\v{s}_t) - b(\v{s}_{t+1}) \|_1$ where $b(\cdot)$ refers to a Gaussian blurring convolution operator with a $3 \times 3$ kernel since we do not expect targets to travel too far in this simulation.
%\npb{Could add one sentence on the motivation for L1 plus blurring.}
For RPCA+BOT-DF, we replace the unbalanced-OT prior in \eqref{eq:RPCA_UOTDF} with a \emph{balanced-OT} prior \eqref{eq:Unbal_OT_beckman}.
All methods are implemented in ADMM \cite{boyd2011distributed} with the following stopping criteria: primal and dual residuals must both be $\leq 10^{-4}$ or a maximum of $5000$ iterations are reached.

In terms of performance metrics we used: (i) \emph{relative MSE} (rMSE) to measure the normalized $\ell_2$ reconstruction error, and (ii) \emph{F1 score} to measure accuracy of estimated support.
%rMSE is computed as $\norm{\v{x}^\star-\hat{\v{x}}}_2^2 / \norm{\v{x}^\star}_2^2$, where $\v{x}^\star$ is the ground truth signal and $\hat{\v{x}}$ is the estimated signal; the rMSE lies in the range of $[0,\infty]$ with 0 representing perfect reconstruction.
The F1 score (also called the \emph{S{\o}rensen-Dice coefficient}) is computed as the harmonic mean between precision and recall of reconstruction and ground truth masks (using a threshold set at an intensity of 0.05) and computed as $2(\norm{\v{m}^\star \cap \hat{\v{m}}}_0)/(\norm{\v{m}^\star}_0+\norm{\hat{\v{m}}}_0)$, where $m_i = 1$ if $x_i \geq \text{threshold}$ and $0$ otherwise, and $\norm{\cdot}_0$ refers to the cardinality of the argument.
F1 score lies between 0 and 1, with 1 representing perfect support estimation.
Experimental plots show aggregated results from 20 randomly generated trials, with the markers displaying the median and error bars denoting the $25^{\text{th}}$ to $75^{\text{th}}$ quantiles.

\begin{figure}[htb]
  % fig06.pdf -- Figure000.pdf
  \centering
  \centerline{\includegraphics[trim={0cm 0cm 0.75cm 0cm},clip,width=0.475\textwidth]
  {./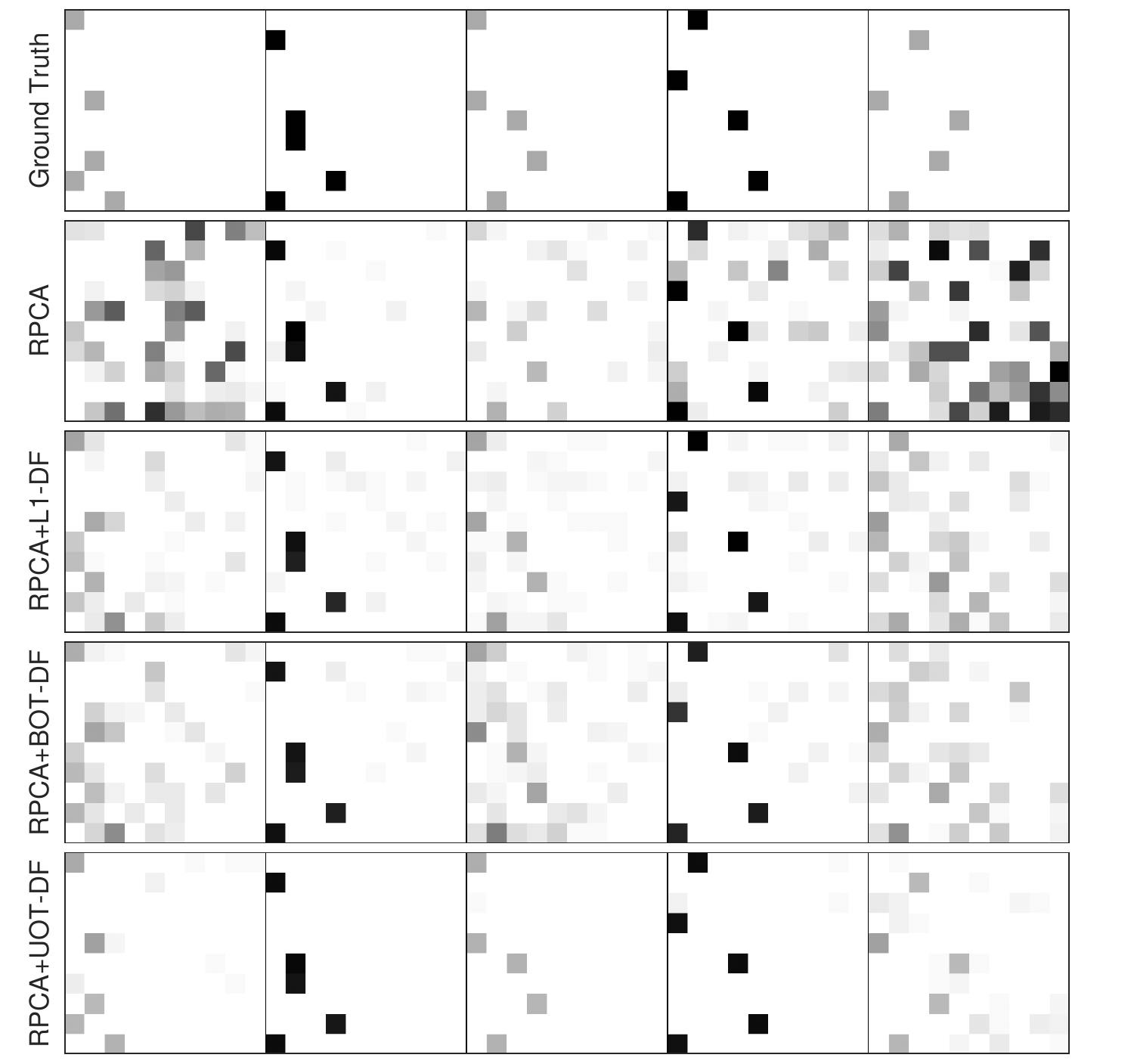}}
\caption{
    Qualitative example of reconstruction of sparse targets from compressive measurements.
    Row labels starting from top: ground truth, robust PCA's reconstruction, RPCA+L1-DF's reconstruction, RPCA+BOT-DF's reconstruction, (proposed) RPCA+UOT-DF's reconstruction.
    While robust PCA is quite successful at reconstructing the sparse targets, it is prone to severe error (frames 1, 3, 4, 5).
    Incorporating an optimal transport regularizer enforces temporal consistency across frames, significantly improving reconstruction performance, especially in accuracy of support.
}
\label{fig:000}
\end{figure}

\begin{figure*}[htb]
  % fig07.pdf -- Figure_MethodProperties.pdf
  \centering
  \centerline{\includegraphics[trim={0.25cm 2.0cm 0.25cm 0cm},clip,width=1.0\textwidth]
  {./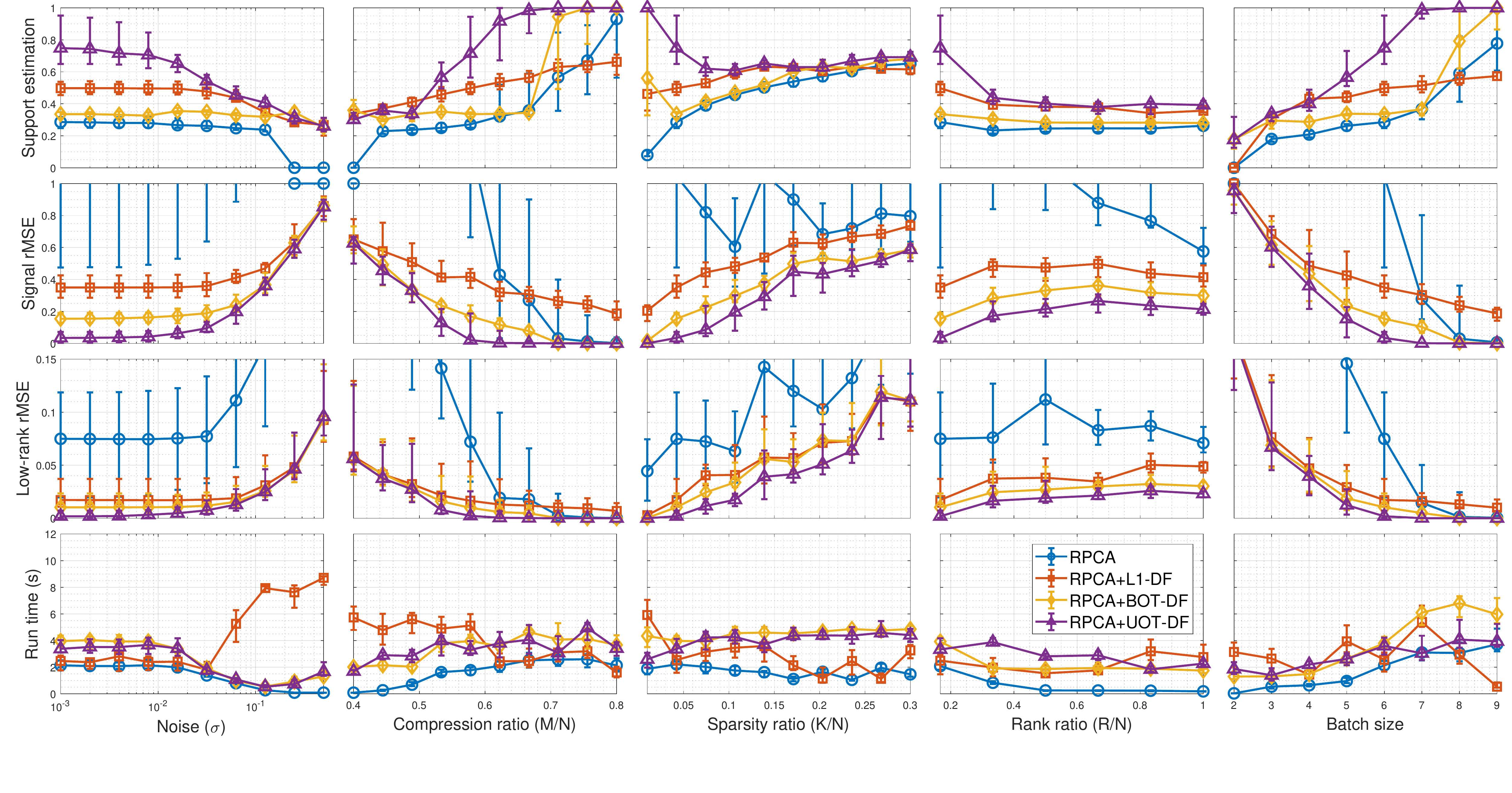}}
\caption{
    Performance of proposed algorithm (RPCA+UOT-DF) compared against other RPCA algorithms on synthetically generated data.
    To understand the algorithm's performance, we vary experimental simulation parameters such as: observation noise (first column), compression rate (second column), number of sparse targets (third column), rank of interference component (fourth column), number of frames in a batch (fifth column).
    The empirical performance limit of all algorithms is discovered by optimizing each algorithm to minimize reconstruction error from 20 independent random trials, as measured by its relative mean square error (rMSE).
    In all plots, we report the metric's trial median using a marker, while error bars denote the $25^{\text{th}}$ to $75^{\text{th}}$ quantiles.
    Our algorithm's main advantage is its significantly superior target support estimation (first row of plots), as measured using F1 score of target masks.
    The proposed algorithm also demonstrates superior performance under the metrics such as rMSE of the sparse component (second row), and rMSE of the low-rank component (third row).
    Finally, we note that algorithm's superior performance is obtained at comparable computational cost to the balanced-OT regularizer (fourth row).
}
\label{fig:001}
\end{figure*}

Figure \ref{fig:000} qualitatively compares the application of our algorithm against the benchmark RPCA.
RPCA generally experiences more noise throughout the reconstruction, and occasionally gross corruption (e.g., frames 1 and 5).
RPCA+L1-DF's sparsity prior on frame differences provides some advantage in support estimation (frame 3 is improved compared to RPCA), yet it is still prone to gross errors (frame 5).
RPCA+BOT-DF similarly shows improvement over RPCA, however we observe mass-overflow effects that were previously also observed in Section \ref{sssec:results_comparison_OT_schemes}.
In contrast, RPCA+UOT-DF demonstrates remarkable robustness and demonstrates significantly improved support estimation over the other algorithms.

Next, we run a series of experiments that characterize the performance of the proposed algorithm in comparison to other algorithms.
To this end, the following simulation parameters are varied:
(i)   the Gaussian noise in the measurements,
(ii)  the compression ratio $M/N$ of the measurement matrix $\v{\Phi}\in\reals^{M \times N}$,
(iii) the number of sparse targets $K$
(iv)  the rank of the low-rank interference component $\v{L}$, and
(v)   the number of frames in a batch $T$.
For each each simulation setting, all solvers are tuned by searching for the optimal parameters (e.g., $\lambda,\kappa,\mu$) via a (logarithmic space) pattern search algorithm \cite{davidon1991variable}, with the low-rank weight assigned as $\gamma = \lambda\sqrt{\max(N,T)}$.
% To highlight the advantage of our algorithm, we selected a range of simulation parameter that demonstrates its superior performance.
Unless explicitly varied in each experiment, our baseline parameters are: noise-level $\sigma = 0.001$, compression ratio $M/N=0.6$, sparsity of $K=5$ pixels, clutter rank of $R=1$, and rate of mass change of $1$ over the intensity range of $0.5$ to $1.5$.
In figure \ref{fig:001}, we see that the proposed algorithm has superior performance over the other algorithms over all three metrics, especially in its ability to accurately recover the support.
In the first column of plots, we observe that unbalanced-OT is most effective when noise is low ($\sigma < 0.1$), since reconstruction suffers when noise rather than signal is the subject of transportation.
In the second column of plots, we observe a region of significant support accuracy advantage ($M/N < 0.5$) despite marginal rMSE advantage over other algorithms.
The third column of plots showcases unbalanced-OT's advantage in sparse regimes: it exploits sparsity to register targets and provide regularization structure.
%The fourth and fifth column of plots demonstrate
% In the fourth column of plots, we observe that unbalanced-OT provides more dynamical-structure than other algorithms.
In the fifth column of plots, we observe that unbalanced-OT has superior sample complexity, since dynamical correlations are more accurately described by our model.
Finally, we see that although RPCA+BOT-DF and RPCA+UOT-DF shares similar formulations, and similar solve times (fifth row), the simple unbalanced modification obtains a significant performance gain.

\subsubsection{Qualitative performance on real video data}
\label{ssec:videodata}

Finally, we evaluate performance on a video sequence of a person walking through an indoor scene.
The snippet consists of 2 seconds of footage recorded at 30 frames per second and is downsampled to a resolution of $95 \times 160$ pixels ($T=60$ and $N=15,200$).
We highlight the fact that OT-regularized problems of this size were simply intractable before our proposed proximal method.
In such practical applications, the foreground component may be darker or lighter than the background, so we must modify the RPCA+UOT-DF formulation to remove the $\v{S} \geq 0$ constraint.
Since the OT formulation takes nonnegative signals as its inputs, we decompose the sparse component into positive and negative components (in similar fashion as \cite{bertrand2018earth, janati2018wasserstein, profeta2018heat, mainini2012description}) $\v{S} = \v{S}^+ - \v{S}^-$
%\[
%\v{S} = \v{S}^+ - \v{S}^-,
%\]
with $\v{S}^+ , \v{S}^- \geq 0$ and add an OT regularization term for each component.
The RPCA+UOT-DF objective function then becomes
\begin{equation} \label{eq:RPCA_UOTDF_PN}
\begin{aligned}
  \min_{\v{S}^+, \v{S}^- , \v{L} \geq 0} \enspace
  &\frac{1}{2} \sum_{t=1}^T \big( \| \v{y}_t - \v{\Phi}_t (\v{s}^+_t -\v{s}^-_t + \v{l}_t) \|_2^2 \big) \\
  &+ \lambda ( \| \v{S}^+ \|_1 + \| \v{S}^- \|_1 )
  + \gamma \| \v{L} \|_\ast \\
  &+ \kappa \sum_{t=1}^{T-1} \Big( \Vbeck_\mu ( \v{s}^+_{t} , \v{s}^+_{t+1} )
      + \Vbeck_\mu ( \v{s}^-_{t} , \v{s}^-_{t+1} ) \Big) .
\end{aligned}
\end{equation}

In these simulations, we observe linear random projections measurements (in this case, severely compressed with $M/N=0.15$) and use RPCA \eqref{eqn:stable_RPCA}, RPCA+L1-DF and RPCA+UOT-DF \eqref{eq:RPCA_UOTDF_PN} to extract the moving person from the background scene.
As before, pattern search was employed in the selection of algorithm parameters.
However, to avoid the prohibitive computation time required to optimize directly with the full resolution data, parameters were chosen by first using pattern search on heavily downsampled data to obtain an approximation to the optimal parameter set, and then fine-tuned manually using the original data.
Unlike previous simulations, we found that the relationship $\gamma = \lambda \sqrt N$ did not yield optimal results, so $\lambda$ and $\gamma$ were selected independently.
Figure~\ref{fig:lab} shows several example frames which demonstrate how the UOT regularizer enables successful recovery even after compression.
RPCA misses the foreground almost entirely, while RPCA+L1-DF yields only a crude estimate due to the inability of the $\ell_1$ dynamics regularizer to effectively capture continuity in the sparse component.
Similarly, RPCA+BOT-DF fails to cope with the changing mass as the subject enters the frame and walks through the scene.

\begin{figure*}[htb]
  % fig08.pdf -- mail_alt.pdf
  \centering
  \centerline{\includegraphics[trim={0cm 0.2cm 0.1cm 0.1cm},clip,width=1.0\textwidth]
  {./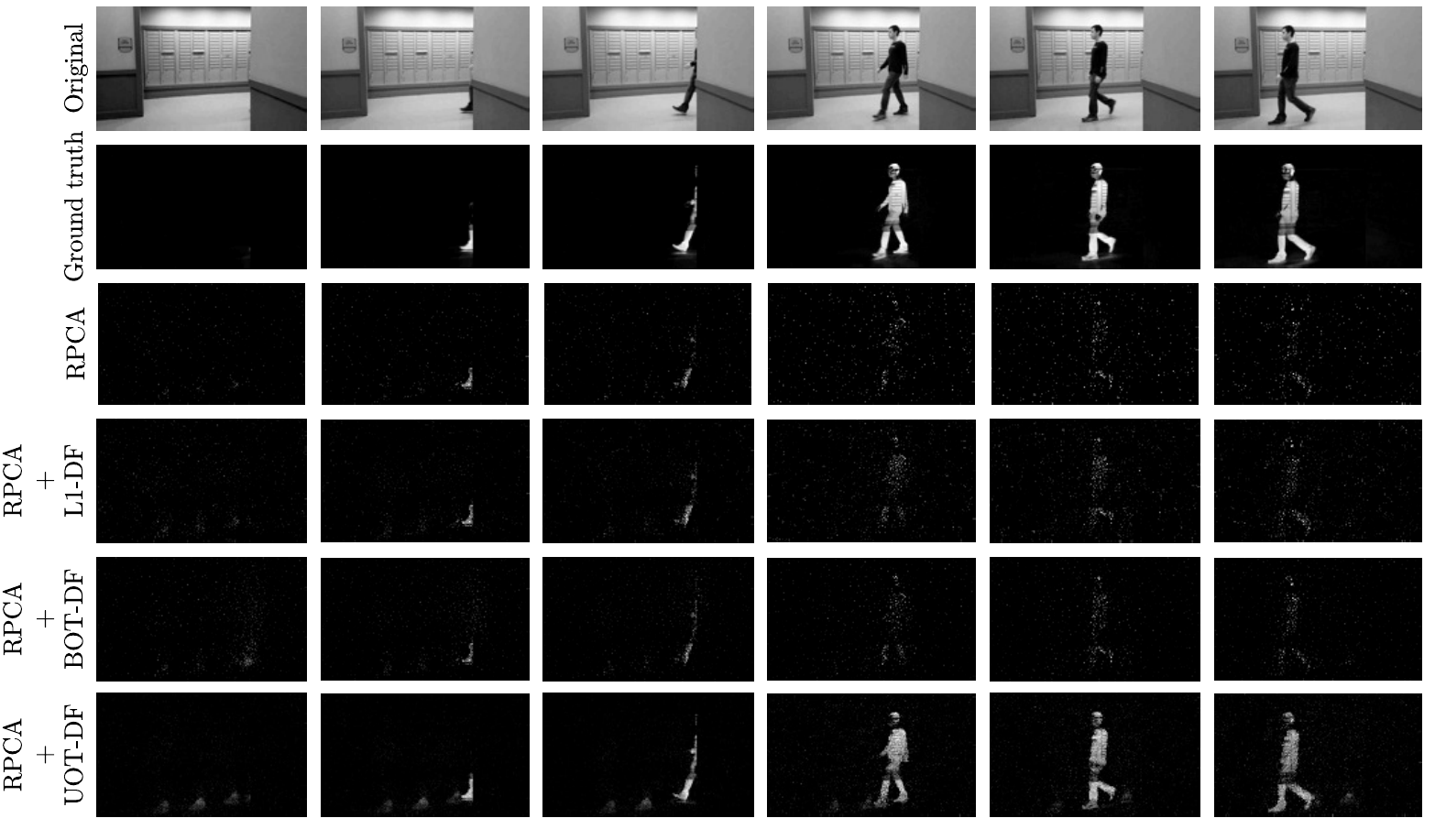}}
\caption{
  Separation of a moving subject from its environment using compressive measurements of a real video clip.
  Even under heavy compression ($M/N=0.15$), the UOT regularizer enables nearly perfect recovery and separation of the person walking through the scene from the background.
  The foreground recovered by RPCA is noisy and the subject is all but lost.
  Although the dynamics regularizer in RPCA+L1-DF reduces noise somewhat, the $\ell_1$ regularizer is not able to effectively leverage the continuity between frames and the subject remains barely distinguishable.
  RPCA+BOT-DF also yields inferior performance due to its constant mass modeling assumption that fails to hold as the subject enters the scene.
}
\label{fig:lab}
\end{figure*}

%%%%%%%%%%%%%%%%%%%%%%%%%%%%%%%%%%%%%%%%%%%%%%%%%%%%%%%%%%%%%%%%%%%%%%%%%%%%%%%%
%%%%%%%%%%%%%%%%%%%%%%%%%%%%%%%%%%%%%%%%%%%%%%%%%%%%%%%%%%%%%%%%%%%%%%%%%%%%%%%%
%%%%%%%%%%%%%%%%%%%%%%%%%%%%%%%%%%%%%%%%%%%%%%%%%%%%%%%%%%%%%%%%%%%%%%%%%%%%%%%%

\section{Summary}

In this paper, we propose a novel regularizer for inverse imaging problems based on recent advances in optimal transport.
We empirically demonstrate that our unbalanced optimal transport regularizer, when cast in Beckmann's formulation, is not only tractable for large scale imaging but also has crucial unbalanced modeling features that overcome limitations of traditional optimal transport constraints.
In addition, we propose a parallel proximal algorithm for this regularizer to allow it to be used with first order solvers for large-scale imaging applications.
To demonstrate the efficacy and efficiency of our method and solver, we focused on target tracking applications in online and batched settings because temporal and spatial continuities are well modelled using our proposed regularizer.
We characterize our method against other benchmarks on a synthetic dataset, and demonstrate superiority at reconstructing dynamical signals.
These promising results indicate that real-time extensions of this work using parallel hardware implementations (e.g., ASIC, FPGA, GPU) will be a fruitful path for future work.
Lastly, we qualitatively demonstrate the efficacy of our method at recovering a moving subject from compressive measurements of real video data.
%Although our method's optimization program can easily grow very large, it remains scalable because of its inherent separable structure.
While our RPCA+UOT method in Section \ref{ssec:videodata} is a successful proof of concept, it may be unable to handle rapid changes in relative darkness/lightness between the foreground and background; for future work, richer OT models should therefore be developed to handle mass transfer between the positive and negative components.
%\npb{One possible criticism of this technique is that it would have a hard time where the relative darkness/lightness between foreground/background changes rapidly.} % John: Agree

%%%%%%%%%%%%%%%%%%%%%%%%%%%%%%%%%%%%%%%%%%%%%%%%%%%%%%%%%%%%%%%%%%%%%%%%%%%%%%%%
%%%%%%%%%%%%%%%%%%%%%%%%%%%%%%%%%%%%%%%%%%%%%%%%%%%%%%%%%%%%%%%%%%%%%%%%%%%%%%%%
%%%%%%%%%%%%%%%%%%%%%%%%%%%%%%%%%%%%%%%%%%%%%%%%%%%%%%%%%%%%%%%%%%%%%%%%%%%%%%%%

%\section*{Acknowledgment} % Remember to add this in final copy
%The authors would like to thank Mark Davenport for seeding the initial idea of RPCA+UOT-DF to us.

\bibliographystyle{IEEEtran}
\bibliography{refs}

\end{document}